\newcommand{\teng}[1]{{\color{red} #1}}
\begin{document}

\title{Mismatched estimation of non-symmetric \\ rank-one matrices corrupted by structured noise} 


%
\author{%
  \IEEEauthorblockN{Teng Fu\IEEEauthorrefmark{1},
                    YuHao Liu\IEEEauthorrefmark{1},
                    Jean Barbier,
                    Marco Mondelli\IEEEauthorrefmark{2},
                    ShanSuo Liang\IEEEauthorrefmark{3}
                    and TianQi Hou\IEEEauthorrefmark{3}}
  \IEEEauthorblockA{\IEEEauthorrefmark{1}%
                    Department of Mathematical Sciences, Tsinghua University, Beijing, China}
  \IEEEauthorblockA{\IEEEauthorrefmark{2}%
                    Institute of Science and Technology, Austria}
  \IEEEauthorblockA{\IEEEauthorrefmark{3}%
                    Theory Lab, Central Research Institute, 2012 Labs, Huawei Technologies Co., Ltd.}
  \IEEEauthorblockA{%
    Emails: \{fut21, yh-liu21\}@mails.tsinghua.edu.cn, jean.barbier.cs@gmail.com, marco.mondelli@ist.ac.at,\\
    liang.shansuo@huawei.com, thou@connect.ust.hk}
}

\maketitle

\begin{abstract}
    We study the performance of a Bayesian statistician who estimates a rank-one signal corrupted by non-symmetric rotationally invariant noise with a generic distribution of singular values. As the signal-to-noise ratio and the noise structure are unknown, a Gaussian setup is incorrectly assumed. We derive the exact analytic expression for the error of the mismatched Bayes estimator and also provide the analysis of an approximate message passing (AMP) algorithm. The first result exploits the asymptotic behavior of spherical integrals for rectangular matrices and of low-rank matrix perturbations; the second one relies on the design and analysis of an auxiliary AMP. The numerical experiments show that there is a performance gap between the AMP and Bayes estimators, which is due to the incorrect estimation of the signal norm.
\end{abstract}

\section{Introduction and set-up}
\label{sec:introduction}

The estimation of low-rank matrices from noisy observations has found numerous applications in statistics and machine learning: sparse principal component analysis (PCA) \cite{johnstone2009consistency, zou2006sparse}, community detection \cite{abbe2017community, moore2017computer} or group synchronization \cite{perry2018message} just to cite a few. Studies in statistics \cite{johnstone2001distribution, paul2007asymptotics} and random matrix theory \cite{baik2005phase, baik2006eigenvalues, bai2012sample, benaych2011eigenvalues, benaych2012singular} have established the properties of spectral algorithms. In parallel, the application of approximate message passing (AMP) algorithms for inference, whose asymptotic performance can be tracked by state evolution (SE) \cite{bayati2011dynamics, bolthausen2014iterative}, has flourished: they have been used for (generalized) linear models \cite{bayati2011dynamics, donoho2009message,barbier2019optimal, maillard2020phase, mondelli2021approximate, rangan2011generalized} or, closer to our setting, low-rank matrix recovery with Gaussian noise \cite{deshpande2014information,macris2020all, dia2016mutual,fletcher2018iterative, montanari2021estimation}.

 
 Recently, the interest towards more ``realistic'', richer models of matrix estimation, which include mismatch (in contrast to the Bayes-optimal setting generally studied) and/or statistical dependencies in the noise (rather than i.i.d. Gaussian) has strongly increased, see \cite{pourkamali2022mismatchedsym, pourkamali2022mismatchednonsym,camilli2022inference, barbier2022price}. The present paper extends \cite{barbier2022price} to the case of non-symmetric signal and noise matrices. Our main contributions are \emph{(i)} an expression for the mean-square error (MSE) of the mismatched Bayes estimator, and \emph{(ii)} a state evolution for the Gaussian AMP in the mismatched noise-statistics case. Comparing the Bayesian, AMP and spectral estimators, we find a (surprising) phenomenology similar to the symmetric case \cite{barbier2022price}; in particular the Bayes and AMP estimators do not match as one could expect. 

\label{sec:problem-model}
We now introduce the set-up. Let $\bm{u}^*\in \mathbb{R}^n$ and $\bm{v}^*\in \mathbb{R}^m$ be uniformly distributed on the sphere $\mathbb{S}^{n-1}(\sqrt{n})$ and $\mathbb{S}^{m-1}(\sqrt{m})$, respectively. The uniform measure on the sphere $\mathbb{S}^{n-1}(\sqrt{n})$ will be denoted by $P_n$. We focus on the task of inferring $(\bu^*,\bv^{*})$ from observed data $\bm{Y}$ constructed as
\begin{equation}
    \bm{Y}=\sqrt{\frac{\lambda_*}{mn}}\bm{u}^*\bm{v}^{*\mathsf{T}}+\bm{Z}\in \mathbb{R}^{n\times m}.\label{data}
\end{equation}
The scalar $\lambda_*\ge 0$ is the signal-to-noise ratio (SNR). We assume that $n/m\to\alpha\in(0, 1]$ as $n\to\infty$ ($\alpha\le 1$ without loss of generality since we can consider $\bm{Y}^\mathsf{T}$ when $\alpha>1$). With a slight abuse of notation, we will also denote $\alpha=\alpha_n=n/m$. The matrix $\bm{Z}$ representing the noise is bi-unitary invariant, meaning that $\bm{U}\bm{Z}\bm{V}^\mathsf{T}$ equals in law $\bm{Z}$ for any orthogonal matrices $\bm{U}\in\mathbb{R}^{n\times n}$ and $\bm{V}\in\mathbb{R}^{m\times m}$. Its empirical singular law $\mu_{\bm{Z}}=\frac{1}{n}\sum_{i\le n}\delta_{\sigma_i}$, where $(\sigma_i)_{i\le n}$ are the singular values of $\bm{Z}$, converges in the weak sense to $\mu_{Z}$ as $n\to\infty$. The asymptotic empirical spectral density of $\bm{Z}\bm{Z}^\mathsf{T}$ is denoted by $\rho_Z$ and is obtained from $\mu_{Z}$ through the change of density $\mu_Z(x)\dif x=\rho_Z(y)\dif y$ with $y=x^2$. We also know that the asymptotic spectral density of $\bm{Z}^\mathsf{T}\bm{Z}$ satisfies $\hat{\rho}_Z=\alpha\rho_Z+(1-\alpha)\delta_0$. Note that $\bm{Y}$ has the same limiting singular law as $\bm{Z}$, but not necessarily the same limiting largest singular value, so we will later omit the subscript $Z$ without causing ambiguity for the asymptotic law of both $\bZ$ and $\bY$, e.g., $\mu$ for both $\mu_Z=\mu_Y$ and $\rho$ for $\rho_Z=\rho_Y$. We denote by $\bar{\nu}$ the limit of the largest singular value of $\bm{Y}$ and by $\bar{\gamma}$ the supremum of the support of $\mu$. $\bZ$ has asymptotically no outliers, so the end-point $\bar{\gamma}$ of the support of $\mu$ is also its asymptotic largest singular value.

For a sequence of estimators $M(\bm{Y})=M_n(\bm{Y})\in\mathbb{R}^{n\times{m}}$ of $\bm{u}^*\bm{v}^{*\mathsf{T}}$, we define the corresponding MSE as
\begin{equation}
    \text{MSE}_n(M):=\frac{1}{2mn}\mathbb{E}\|\bm{u}^*\bm{v}^{*\mathsf{T}}-M(\bm{Y})\|_{\text{F}}^2,
    \label{eq:MSE-n}
\end{equation}
where $\mathbb{E}$ denotes the expectation over $(\bm{Z},\bm{u}^*,\bm{v}^*)$. We also consider another performance measure which is insensitive to the norm of estimators $\bm{u}$, $\bm{v}$ of $\bm{u}^*$, $\bm{v}^*$, namely, the rescaled overlap (below $\langle \,\cdot\,,\, \cdot\,\rangle$ is the standard inner product):
\begin{equation}
    \text{Overlap}(\bm{u},\bm{v}):=\lim_{n\to\infty}\frac{|\langle\bm{u},\bm{u}^*\rangle|}{\|\bm{u}\|\|\bm{u}^*\|}\frac{|\langle\bm{v},\bm{v}^*\rangle|}{\|\bm{v}\|\|\bm{v}^*\|}. \label{eq:overlap-n}
\end{equation}
If the estimators' norm vanishes, we set the overlap to zero.

\section{Mismatched Bayes estimator}

We consider the performance of a statistician who is wrongly assuming a Gaussian noise with i.i.d. $\mathcal{N}(0,1/m)$ entries and an SNR $\lambda$ possibly different from the true value $\lambda_*$. The mismatched posterior distribution used for inference therefore reads
\begin{equation}
    P_{\text{mis}}(\d \bm{u}, \d \bm{v}\mid\bm{Y})=\frac{1}{Z_n(\bm{Y})}e^{\sqrt{\frac{\lambda}{\alpha}}\langle \bm{u}, \bm{Y}\bm{v} \rangle}P_n(\d \bm{u})P_m(\d \bm{v})\label{posterior}
\end{equation}
and the partition function (i.e., posterior normalization) is
\begin{equation}
    Z_n(\bm{Y}) := \int P_n(\d\bm{u})P_m(\d\bm{v})\exp\Big( \sqrt{\frac{\lambda}{\alpha}}\langle \bm{u}, \bm{Y}\bm{v} \rangle \Big). \label{equ-Zn}
\end{equation}
The associated mismatched Bayes estimator we analyze is 
\begin{equation}
    M_{\text{mis}}(\bm{Y}):=\int\bm{u}\bm{v}^\mathsf{T}{P}_{\text{mis}}(\d\bm{u},\d\bm{v}\mid\bm{Y}).
\end{equation}
We will analyze the $n\to\infty$ limit of \eqref{eq:MSE-n}, denoted by $\text{MSE}(M_{\text{mis}})$. However, definition  \eqref{eq:overlap-n} is not meaningful for the Bayes case (the posterior mean of $\bu$, $\bv$ can be $0$ by sign symmetry). Thus, in the Bayesian case, the overlap is defined as
\begin{equation}
\begin{IEEEeqnarraybox}{rl}
    &\text{Overlap}_{\text{mis}}:=\lim_{n\to\infty}\Big(\frac{1}{mn}\frac{1}{\|M_{\text{mis}}(\bm{Y})\|_{\text{F}}^2} \\
    &\quad\times \int P_{\text{mis}}(\d\bm{u},\d\bm{v}\mid\bm{Y})\langle\bm{u},\bm{u}^*\rangle^2\langle\bm{v},\bm{v}^*\rangle^2\Big)^{{1}/{2}}.
\end{IEEEeqnarraybox}
\label{eq:overlap-mis}
\end{equation}

\noindent \emph{\bf{Definitions.}} We provide here some useful definitions and transforms directly borrowed from \cite{benaych2011rectangular,benaych2012singular}. For any symmetric probability measure $\mu$, the D-transform with ratio $\alpha$ of $\mu$ is defined as
\begin{equation*}
    D_{\mu}^{(\alpha)}(z):=\int\mu(\d t)\frac{z}{z^2-t^2} \times \Big[ \alpha\int\mu(\d t)\frac{z}{z^2-t^2} + \frac{1-\alpha}{z} \Big]
\end{equation*}
for $z>\bar{\gamma}$, and $D_{\mu}^{(\alpha),-1}(z)$ will denote its functional inverse on $(\bar{\gamma},+\infty)$. Let $T^{(\alpha)}(z):=(\alpha z+1)(z+1)$ whose inverse from $[-1,\infty)$ to $[0,\infty)$ is denoted by $T^{(\alpha),-1}(z)$, since it is increasing in this interval. Finally, the \emph{rectangular R-transform with ratio $\alpha$ of $\mu$} is defined as $C^{(\alpha)}_{\mu}(z)={T^{(\alpha),-1}}(z(D_{\mu}^{(\alpha),-1}(z))^2)$ for $z\neq 0$ and $C^{(\alpha)}_{\mu}(0)=0$. We sometimes omit the superscript $\alpha$ without causing ambiguity, e.g., $D_{\mu}^{-1}(z)$ for $D_{\mu}^{(\alpha),-1}(z)$.

\subsection{Log-partition function}
Our first result concerns the log-partition $\ln  Z_n(\bm{Y})$, whose expression allows to locate the phase transitions of inference and derive the MSE of the Bayes estimator. Our calculations take inspiration from \cite{pourkamali2022mismatchedsym,barbier2022price,pourkamali2022mismatchednonsym}. The main point is that thanks to the spherical nature of the prior distribution of the signals $(\bu^*,\bv^*)$, the partition function \eqref{equ-Zn} is a \emph{rectangular spherical integral} $I_n$ \cite{benaych2011rectangular}. In fact, we can identify the partition function as $Z_n(\bm{Y})=I_n(\sqrt{\lambda/\alpha},\bm{Y})$, where
\begin{equation}
    I_n(\theta,\bm{Y}):=\int P_n(\d\bm{u})P_m(\d\bm{v})\exp\left( \theta\langle \bm{u}, \bm{Y}\bm{v} \rangle \right). \label{spinglass}
\end{equation}
The rigorous asymptotic formula for $I_n$ of \cite{benaych2011rectangular} is restricted to small values of $\theta$. In contrast, the non-rigorous formula shown in \cite{kabashima2008inference,maillard2019high} applies to any $\theta$ and reads, as $n\to\infty$,
\begin{equation}
    \frac{1}{n}\ln I_n(\theta,\bm{Y})\to  \frac{1}{2\alpha}\,\underset{(z_1, z_2)\in \mathcal{D}(\theta,\bar\nu)}{\rm extr}\, \phi(z_1, z_2)-\frac{1+\alpha}{2\alpha},  \label{Kaba-int}
\end{equation}
where $\phi=\phi(z_1, z_2)$ reads
\begin{equation}
    \phi:= -\int\hat{\rho}(\d t)\ln(z_1z_2-\theta^2t)+z_1\alpha+z_2-(\alpha-1)\ln z_1 \label{func-phi12}
\end{equation}
and $\mathcal{D}(\theta,\bar\nu):=\{(z_1,z_2) \in \mathbb{R}_{>0}\times\mathbb{R}_{>0}: z_1z_2\ge \theta^2\bar \nu^2\}$. The extremum ${\rm extr}\, \phi$ means that $\phi$ is evaluated at $(z_1,z_2)$ verifying $\nabla \phi=\boldsymbol{0}$, if this stationary point $(z_1,z_2)$ (which can be shown to be unique) exists. If $\nabla \phi=\boldsymbol{0}$ does not possess a solution, the extremum selects the unique boundary of $\mathcal{D}$. 

We aim for a more explicit expression for (\ref{Kaba-int}). The key point is to analyze the ``high and low temperatures'' regimes in  $\theta=\sqrt{\lambda/\alpha}$ of the rectangular spherical integral \eqref{spinglass} and locate the transition point between the two. The high temperature regime corresponds to values of $\theta$ such that $\nabla \phi=\boldsymbol{0}$ does possess a solution; in the low temperature it does not and $(z_1,z_2)$ stick to their boundary value. The ``sticking transition'' separating the two regimes happens at
\begin{equation}
    \bar{\theta} := \Big(\lim_{\nu \downarrow \bar{\nu}}D_{\mu}^{(\alpha)}({\nu})\Big)^{1/2}. \label{theta_trans} 
\end{equation}
Moreover, there is also a second phase transition, this time controlled by the value of $\lambda_*$, which separates a region where the largest singular value of the data $\bY$ is separated from the bulk of singular values (and therefore non-trivial estimation is possible) from a region where there is no outlier. We call it ``BBP transition'' in reference to \cite{baik2005phase}. This transition was studied for the present setting in \cite{benaych2012singular}. Therefore, depending on the values of $\lambda_*$ and $\theta=\sqrt{\lambda/\alpha}$, four possible regimes emerge (two of which yielding the same expression for the log-partition function). The rectangular spherical integral in the two temperature regimes is derived in Appendix \ref{sec:appendix-logfunc}; we recall the background on the BBP transition in Appendix \ref{sec:appendix-lowrank}.

\vspace{5pt}
\noindent \emph{\bf{Generalized observation model.}} We are going to state our first result for a slightly more general model for the data than \eqref{data}, which will later be useful to derive the MSE of the Bayes estimator. It is defined by $\bm{Y}_{\epsilon}=\bY+\sqrt{\epsilon}\,\bm{W}$, where we added a standard Wigner matrix $\bW$ to the original data \eqref{data}, with $\epsilon\ll 1$.
Then, we can define the log-partition function $\ln Z_n(\bm{Y}_{\epsilon})$ 
where $Z_n(\bm{Y}_\epsilon)$ is defined as in \eqref{equ-Zn} but for $\bm{Y}_{\epsilon}$ instead of $\bm{Y}$. This model matches the initial one \eqref{data} when $\epsilon=0$. We define $\bar{\nu}_{\epsilon}$ as the limit as $n,m\to\infty$ of the largest singular value of $\bm{Y}_{\epsilon}$; $\bar{\gamma}_{\epsilon}$ is the limit of the largest singular value of the (generalized) noise $\bm{Z}+\sqrt{\epsilon}\,\bm{W}$, whose limiting density of singular values is $\mu_\epsilon$; the limiting eigenvalue density of $\bY_\epsilon^\mathsf{T}\bY_\epsilon$ is denoted by $\hat \rho_\epsilon$, and $\bar{h}_{\epsilon}:=\lim_{x\downarrow\bar{\gamma}_{\epsilon}^2}D_{\mu_{\epsilon}}(x)$. When $\epsilon=0$ we omit the subscript $0$, i.e., $\bar{\nu}$ for $\bar{\nu}_0$, $\bar{\gamma}$ for $\bar{\gamma}_0$, etc. Our conjecture for the log-partition function of this generalized model is as follows.

\begin{conjecture}[Log-partition function] \label{Conj:LogZ}
Define the sticking transition separating the high and low temperature regimes of the rectangular spherical integral for the generalized model: $\bar{\lambda}_{\epsilon}:=\alpha\lim_{z\downarrow\bar{\nu}_{\epsilon}}D_{\mu_{\epsilon}}(z)$. We almost surely have 
\begin{equation*}
   \lim_{n\to \infty}\frac{1}{n}\ln Z_n(\bm{Y}_{\epsilon})= \lim_{n\to \infty}\frac{1}{n}\ln I_n\big(\sqrt{\frac{\lambda}{\alpha}},\bm{Y}_{\epsilon}\big)=f_{\epsilon}^{(\alpha)}(\lambda,\lambda_*),
\end{equation*}
where
\begin{equation*}
    f_{\epsilon}^{(\alpha)}(\lambda,\lambda_*):=
    \begin{cases}
    g_{\lambda,\epsilon}^{(\alpha)}((D^{-1}_{\mu_\epsilon}(\frac{1}{\lambda_*}))^2) & \text{$\bar{h}_{\epsilon}\lambda_*\geq 1\cap\lambda\lambda_*>\alpha$}, \\
    g_{\lambda,\epsilon}^{(\alpha)}(\bar{\gamma}_{\epsilon}^2) & \text{$\bar{h}_{\epsilon}\lambda_*<1\cap\lambda>\alpha\bar{h}_{\epsilon}$}, \\ 
    \int_{0}^{\sqrt{\frac{\lambda}{\alpha}}}\frac{C_{\mu_\epsilon}^{(\alpha)}(t^2)}{t}\d t & \text{otherwise}.
    \end{cases} 
\end{equation*}
with
\begin{IEEEeqnarray}{rl}
    {g_{\lambda,\epsilon}^{(\alpha)}(x)}
    :=-\frac{1}{2\alpha}&\Big[\int\hat{\rho}_{\epsilon}(\d t)\ln(x-t)-2\alpha T^{(\alpha),-1}\Big(\frac{\lambda x}{\alpha}\Big) \nn
    &+\>(\alpha-1)\ln\Big(T^{(\alpha),-1}\Big(\frac{\lambda x}{\alpha}\Big)+1\Big)+\ln\frac{\lambda}{\alpha}\Big]. \nonumber
\end{IEEEeqnarray}
Moreover, we also have that $\frac{1}{n}\mathbb{E}\ln Z_n(\bm{Y}_{\epsilon})\to f_{\epsilon}^{(\alpha)}(\lambda,\lambda_*)$.
\end{conjecture}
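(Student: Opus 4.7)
The plan is to identify the partition function with a rectangular spherical integral and then carry out two separate computations: the extremization in the Kabashima--Maillard formula \eqref{Kaba-int}, and the BBP-type analysis of the top singular value of $\bY_\epsilon$. Because $P_n,P_m$ are uniform on spheres, \eqref{equ-Zn} and \eqref{spinglass} give $Z_n(\bY_\epsilon)=I_n(\sqrt{\lambda/\alpha},\bY_\epsilon)$ directly, and \eqref{Kaba-int} (applied with $\hat\rho_\epsilon,\bar\nu_\epsilon$ in place of $\hat\rho,\bar\nu$) reduces the almost-sure limit to $\frac{1}{2\alpha}\,{\rm extr}\,\phi-\frac{1+\alpha}{2\alpha}$ with $\theta=\sqrt{\lambda/\alpha}$.

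Next I would carry out the extremization explicitly. Writing $q:=z_1z_2$ and $G(q):=\int\hat\rho_\epsilon(\d t)/(q-\theta^2 t)$, the stationarity conditions $\nabla\phi=\bm 0$ become $z_1 G(q)=1$ and $z_2 G(q)+(\alpha-1)/z_1=\alpha$, which determine $(z_1,z_2)$ uniquely as long as $q>\theta^2\bar\nu_\epsilon^2$. A direct but tedious algebraic manipulation, exploiting the identities that define $D_\mu^{(\alpha)}$ and $T^{(\alpha)}$, shows that $\theta\mapsto\frac{1}{2\alpha}\phi-\frac{1+\alpha}{2\alpha}$ evaluated at the stationary point has derivative $C_{\mu_\epsilon}^{(\alpha)}(\theta^2)/\theta$ and vanishes at $\theta=0$, which integrates to the high-temperature branch $\int_0^{\sqrt{\lambda/\alpha}} C_{\mu_\epsilon}^{(\alpha)}(t^2)/t\,\d t$. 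A stationary point inside $\mathcal{D}(\theta,\bar\nu_\epsilon)$ exists iff $\lambda\le\bar\lambda_\epsilon:=\alpha D_{\mu_\epsilon}(\bar\nu_\epsilon)$; beyond this, the extremum sticks to the boundary $z_1 z_2=\theta^2\bar\nu_\epsilon^2$, and inserting this constraint into $\phi$ while eliminating the remaining free variable through $T^{(\alpha),-1}(\lambda\bar\nu_\epsilon^2/\alpha)$ recovers exactly $g_{\lambda,\epsilon}^{(\alpha)}(\bar\nu_\epsilon^2)$.

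It remains to make $\bar\nu_\epsilon$ deterministic, which is the role of the BBP-type result of \cite{benaych2012singular} recalled in Appendix \ref{sec:appendix-lowrank}: $\bar\nu_\epsilon=D_{\mu_\epsilon}^{-1}(1/\lambda_*)>\bar\gamma_\epsilon$ when $\bar h_\epsilon\lambda_*\ge 1$, and $\bar\nu_\epsilon=\bar\gamma_\epsilon$ otherwise. Substituting these two possibilities into the sticking threshold $\lambda>\bar\lambda_\epsilon=\alpha D_{\mu_\epsilon}(\bar\nu_\epsilon)$ gives respectively $\lambda\lambda_*>\alpha$ and $\lambda>\alpha\bar h_\epsilon$, while the complement of both yields the high-temperature integral; this reproduces the three-line definition of $f_\epsilon^{(\alpha)}$.

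The principal obstacle is the rigorous justification of \eqref{Kaba-int} itself: the formula of \cite{benaych2011rectangular} is only proved for sufficiently small $\theta$, so the low-temperature branches of $f_\epsilon^{(\alpha)}$ are accessed only through the non-rigorous calculations of \cite{kabashima2008inference,maillard2019high}, which is precisely why the statement is phrased as a conjecture. Once almost-sure convergence is granted, passing to $\frac{1}{n}\mathbb{E}\ln Z_n(\bY_\epsilon)\to f_\epsilon^{(\alpha)}(\lambda,\lambda_*)$ is standard: Gaussian concentration in the Wigner perturbation $\bW$ yields $O(n^{-1/2})$ fluctuations of $\frac{1}{n}\ln Z_n(\bY_\epsilon)$ around its mean, bi-unitary invariance of $\bZ$ provides a matching bound for the dependence on $\bZ$, and the deterministic estimate $\bigl|\frac{1}{n}\ln Z_n(\bY_\epsilon)\bigr|\lesssim\|\bY_\epsilon\|_{\mathrm{op}}$ together with tightness of the operator norm supplies the uniform integrability needed to exchange limit and expectation.
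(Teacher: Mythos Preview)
Your proposal is correct and follows essentially the same route as the paper: identify $Z_n$ as a rectangular spherical integral, extremize the Kabashima--Maillard action \eqref{Kaba-int} to distinguish a high-temperature branch (yielding the $C_{\mu_\epsilon}^{(\alpha)}$ integral) from a low-temperature sticking branch (yielding $g_{\lambda,\epsilon}^{(\alpha)}(\bar\nu_\epsilon^2)$), and then feed in the BBP dichotomy for $\bar\nu_\epsilon$ from \cite{benaych2012singular} to obtain the three cases.

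Two minor remarks on presentation rather than substance. First, in the low-temperature regime you write ``eliminating the remaining free variable through $T^{(\alpha),-1}$'' without saying where the second relation comes from; the paper obtains $z_2=\alpha z_1-\alpha+1$ either by combining the two stationary conditions of $\phi_n$ (the finite-$n$ action) or, equivalently, by extremizing $\phi$ restricted to the boundary curve $z_1z_2=\theta^2\bar\nu_\epsilon^2$. It is worth making explicit which of these you intend. Second, the paper's derivation of the high-temperature identity is slightly more concrete: it uses the moment/cumulant relation $K_{\hat\rho}(z(M_{\hat\rho}(z)+1))=M_{\hat\rho}(z)$ and Lemma~3.2 of \cite{benaych2011rectangular} to show $z_1-1=C_{\mu}^{(\alpha)}(\theta^2)$ directly, rather than appealing to ``direct but tedious manipulation.'' Your added paragraph on passing from almost-sure to in-expectation convergence is a welcome complement, since the paper only asserts this without argument.
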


There is a link between the log-partition functions of the non-symmetric case with $\alpha=1$ and the symmetric case. Indeed, from \cite{barbier2022price} there is also an expression for the partition function denoted by $f_{\epsilon}^{\text{sym}}(\lambda,\lambda_*;\rho_\epsilon)$ in the symmetric case. If we set $\mu_{\epsilon}=\rho_{\epsilon}$ we get that $f_{\epsilon}^{(1)}(\lambda,\lambda_*;\mu_\epsilon) = 2f_{\epsilon}^{\text{sym}}(\lambda,\lambda_*;\rho_{\epsilon})$.
We emphasize that the noise $\bm{Z}$ in the $\alpha=1$ non-symmetric case is a Gaussian matrix instead of Wigner, and the signal vectors $\bm{u}^*\neq\bm{v}^*$. The interpretation of the coefficient $2$ in the above expression is that in the symmetric case, there is just half the ``information'' compared to the non-symmetric case because $Z_{ij}=Z_{ji}$ holds when $\bZ$ is a Wigner matrix.

We remark that in contrast to \cite{barbier2022price} for the symmetric case, the present results are not rigorous. This stems from the fact that much less is known about the rectangular version of the spherical integral \cite{benaych2011rectangular}, and more generically about rectangular free probability \cite{benaych2007infinitely} compared to free probability.

\subsection{From the log-partition function to the mean-square error}

Connecting the log-partition function to the MSE means deriving a ``generalized I-MMSE relation'' for the mismatched (generalized) model, 
similarly to \cite{pourkamali2022mismatchedsym,pourkamali2022mismatchednonsym} for Gaussian noise or, more closely related to this work, to \cite{barbier2022price} for rotationally invariant noise models. The derivation of the relations below for the MSE (defined in \eqref{eq:MSE-n}) is the reason for the introduction of the Wigner matrix in the generalized model, instead of considering \eqref{data} directly. Note also that its (basic) proof is independent of Conjecture \ref{Conj:LogZ}, and is fully rigorous.
\begin{lemma}[Linking the log-partition function to the MSE] Let ${f}_{\epsilon,n}^{(\alpha)}(\lambda,\lambda_*):=\EE\ln Z_n(\bm{Y}_{\epsilon})/n$. For model \eqref{data} we have
\begin{equation*}
    \text{MSE}_n=1-\frac{\alpha}{\lambda}\frac{\partial{f}_{\epsilon,n}^{(\alpha)}(\lambda,\lambda_*)}{\partial\epsilon}\bigg|_{\epsilon=0}-2\alpha\sqrt{\frac{\lambda_*}{\lambda}}\frac{\partial{f}_{0,n}^{(\alpha)}(\lambda,\lambda_*)}{\partial\lambda_*}, 
\end{equation*}
where $\text{MSE}_n:=\text{MSE}_n(M_{\text{mis}}(\bm{Y}))$.
\label{lemma:I-MMSE}
\end{lemma}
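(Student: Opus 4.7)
The idea is a standard I-MMSE-style manipulation tailored to the rectangular mismatched setup. I will expand the Frobenius norm in $\text{MSE}_n$ into a constant, a cross term, and an energy term, and match each of them to one of the two derivatives in the statement. Concretely, since $\|\bm{u}^*\|^2=n$ and $\|\bm{v}^*\|^2=m$ hold almost surely under $P_n\otimes P_m$, one has $\|\bm{u}^*\bm{v}^{*\mathsf{T}}\|_F^2=mn$ and hence
\begin{equation*}
    \text{MSE}_n=\frac{1}{2}-\frac{1}{mn}\mathbb{E}\langle \bm{u}^*\bm{v}^{*\mathsf{T}},M_{\text{mis}}(\bm{Y})\rangle_F+\frac{1}{2mn}\mathbb{E}\|M_{\text{mis}}(\bm{Y})\|_F^2,
\end{equation*}
so it suffices to identify the second (resp.\ third) summand above with the $\lambda_*$--derivative (resp.\ $\epsilon$--derivative) in the claim.

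\textbf{Step 1: the $\lambda_*$--derivative.} Since $\bm{Y}=\sqrt{\lambda_*/(mn)}\,\bm{u}^*\bm{v}^{*\mathsf{T}}+\bm{Z}$, differentiating \eqref{equ-Zn} under the integral sign gives
\begin{equation*}
    \frac{\partial}{\partial \lambda_*}\ln Z_n(\bm{Y})
    =\frac{1}{2\sqrt{\lambda_* mn}}\sqrt{\frac{\lambda}{\alpha}}\int P_{\text{mis}}(\d\bm{u},\d\bm{v}\mid\bm{Y})\langle \bm{u},\bm{u}^*\rangle\langle \bm{v},\bm{v}^*\rangle.
\end{equation*}
The posterior integral equals $\bm{u}^{*\mathsf{T}} M_{\text{mis}}(\bm{Y})\bm{v}^* = \langle \bm{u}^*\bm{v}^{*\mathsf{T}},M_{\text{mis}}\rangle_F$. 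Multiplying by $-2\alpha\sqrt{\lambda_*/\lambda}/n$ and using $\alpha=n/m$ makes the $\lambda_*$ and $\lambda$ factors cancel, yielding exactly $-\frac{1}{mn}\mathbb{E}\langle\bm{u}^*\bm{v}^{*\mathsf{T}},M_{\text{mis}}\rangle_F$, i.e. the desired cross term.

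\textbf{Step 2: the $\epsilon$--derivative.} Here I use Gaussian integration by parts on the entries of $\bm{W}$ (which are i.i.d.\ $\mathcal{N}(0,1/m)$, matching the noise statistics assumed by the mismatched statistician). A naive differentiation of $\ln Z_n(\bm{Y}_\epsilon)$ with respect to $\epsilon$ produces a $1/\sqrt{\epsilon}$ factor, but after taking the $\bm{W}$--expectation Stein's lemma gives
\begin{equation*}
    \mathbb{E}[W_{ij}\langle u_i v_j\rangle_{\text{mis}}]=\frac{1}{m}\cdot\sqrt{\frac{\epsilon\lambda}{\alpha}}\,\mathbb{E}\bigl[\langle (u_i v_j)^2\rangle_{\text{mis}}-\langle u_i v_j\rangle_{\text{mis}}^2\bigr],
\end{equation*}
and the $1/\sqrt{\epsilon}$ cancels. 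The key simplification is that the spherical prior enforces the pointwise identity $\sum_{i,j} u_i^2 v_j^2=\|\bm{u}\|^2\|\bm{v}\|^2=mn$, so after summing one obtains
\begin{equation*}
    \frac{\partial f_{\epsilon,n}^{(\alpha)}}{\partial\epsilon}\bigg|_{\epsilon=0}=\frac{\lambda}{2\alpha\, mn}\bigl(mn-\mathbb{E}\|M_{\text{mis}}\|_F^2\bigr).
\end{equation*}
Multiplying by $\alpha/\lambda$ and subtracting from $1$ gives $\tfrac{1}{2}+\tfrac{1}{2mn}\mathbb{E}\|M_{\text{mis}}\|_F^2$, which matches the constant and the energy term.

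\textbf{Combine and conclude.} Adding the two contributions recovers the expansion of $\text{MSE}_n$ above. No step is really hard: the only subtlety is justifying the exchange of $\partial_\epsilon$ and $\mathbb{E}$ near $\epsilon=0$ (handled by first computing for $\epsilon>0$ via Stein, noting the resulting expression is $\epsilon$--independent, and passing to the limit), plus the observation that the spherical normalization of the prior turns $\|\bm{u}\|^2\|\bm{v}\|^2$ into an \emph{exact} constant rather than only an expectation, which is what makes the Nishimori-like identity close cleanly in the mismatched setting.
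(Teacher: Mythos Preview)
Your proof is correct and essentially the same as the paper's: both expand $\text{MSE}_n$ into the constant $\tfrac12$, the cross term $\mathbb{E}\langle M_n\rangle_0=\tfrac{1}{mn}\mathbb{E}\langle\bm{u}^*\bm{v}^{*\mathsf{T}},M_{\text{mis}}\rangle_F$, and the energy term $\mathbb{E}\langle Q_n\rangle_0=\tfrac{1}{mn}\mathbb{E}\|M_{\text{mis}}\|_F^2$, identify the first with the $\lambda_*$-derivative by direct differentiation, and the second with the $\epsilon$-derivative via Gaussian integration by parts on $\bm{W}$ together with the spherical constraint $\|\bm{u}\|^2\|\bm{v}\|^2=mn$. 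The only cosmetic differences are that the paper phrases the energy term via two replicas and differentiates in $\sqrt{\lambda_*},\sqrt{\epsilon}$ rather than $\lambda_*,\epsilon$; one small slip in your side remark is that the post-Stein expression is not literally ``$\epsilon$-independent'' (it still contains $\mathbb{E}\langle Q_n\rangle_\epsilon$), only free of the $1/\sqrt{\epsilon}$ singularity, but this does not affect the argument.
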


\begin{proof}
First, like \cite{barbier2022price}, we define the objects
\begin{equation*}
    M_{n}:=\frac{\langle\bm{u}^*,\bm{u}\rangle\langle\bm{v}^*,\bm{v}\rangle}{mn},\;\; Q_{n}:=\frac{\langle\bm{u}^{(1)},\bm{u}^{(2)}\rangle\langle\bm{v}^{(1)},\bm{v}^{(2)}\rangle}{mn},
\end{equation*}
where $(\bu,\bv)=(\bu^{(1)},\bv^{(1)})$ and $(\bu^{(2)},\bv^{(2)})$ are two conditionally independent samples from the mismatched posterior $P_{\text{mis}}(\,\cdot \mid\bm{Y}_{\epsilon})$ defined as \eqref{posterior} but with $\bY_\epsilon$ instead of $\bY$. We have
\begin{equation*}
    \text{MSE}_n=\frac{1}{2}\big(1-2\EE\langle M_n\rangle_0+\EE\langle Q_n\rangle_0\big).
\end{equation*}
Here $\langle\,\cdot\,\rangle_{\epsilon}$ stands for the expectation under the joint law $P_{\text{mis}}(\cdot \mid\bm{Y}_{\epsilon})^{\otimes \infty}$ of all posterior samples; $\langle\,\cdot\,\rangle_{0}$ is thus the expectation under $P_{\text{mis}}(\cdot \mid\bm{Y})^{\otimes \infty}$. We straightforwardly have that
\begin{IEEEeqnarray}{rl}
    &\frac{\partial}{\partial\sqrt{\lambda_*}}f_{0,n}^{(\alpha)}(\lambda,\lambda_*)=\frac{\sqrt{\lambda}}{\alpha}\mathbb{E}\langle M_n\rangle_0, \nn
    &\frac{\partial}{\partial\sqrt{\epsilon}}f_{\epsilon,n}^{(\alpha)}(\lambda,\lambda_*) = \frac{1}{n}\sqrt{\frac{\lambda}{\alpha}}\mathbb{E}\langle \bm{u}^\mathsf{T}\bm{W}\bm{v}\rangle_{\epsilon} = \frac{\lambda\sqrt{\epsilon}}{\alpha}(1-\EE\langle Q_n\rangle_{\epsilon}), \nonumber
\end{IEEEeqnarray}
where the last equality uses simple Gaussian integration by parts. Combining the above equations gives Lemma \ref{lemma:I-MMSE}.
\end{proof}

Now, by combining Conjecture \ref{Conj:LogZ} and Lemma \ref{lemma:I-MMSE}, we can get the explicit expression for the MSE of the mismatched Bayes estimator, which would turn into a theorem if a rigorous proof of Conjecture \ref{Conj:LogZ} were provided. To do so, we will take the derivative of $\int\hat{\rho}_{\epsilon}(\d t)\ln(x-t)$ in ${g_{\lambda,\epsilon}^{(\alpha)}(x)}$ with respect to $\epsilon$, using techniques developed for analyzing the Dyson Bessel process \cite{guionnet2021large,potters2020first}. The full derivation is in Appendix \ref{sec:appendix-mse}.

\begin{conjecture}[Performance of mismatched Bayes estimator] \label{theorem-mismatch}
Consider the model \eqref{data}. Then, we have
\begin{equation}
    \!\lim_{n\to\infty}\text{MSE}_n(M_{\text{mis}}(\bY))=\frac{1}{2}\big(1-2M(\lambda,\lambda_*)+Q(\lambda,\lambda_*)\big), \label{eq:MSE-Bayes}
\end{equation}
where $M,Q$ are given by \eqref{eq:MQ-inf} (in which $\mu$ is the asymptotic noise singular density and $\mathds{1}(\cdot)$ is the indicator function).
\end{conjecture}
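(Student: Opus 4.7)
The plan is to combine Lemma~\ref{lemma:I-MMSE} with Conjecture~\ref{Conj:LogZ} and compute the two partial derivatives explicitly. First, I would fix the regime of the pair $(\lambda,\lambda_*)$ with respect to the sticking and BBP thresholds, and substitute the corresponding branch of $f_\epsilon^{(\alpha)}(\lambda,\lambda_*)$ into
\begin{equation*}
    \text{MSE}_n = 1 - \frac{\alpha}{\lambda}\,\frac{\partial f_{\epsilon,n}^{(\alpha)}}{\partial\epsilon}\bigg|_{\epsilon=0} - 2\alpha\sqrt{\frac{\lambda_*}{\lambda}}\,\frac{\partial f_{0,n}^{(\alpha)}}{\partial \lambda_*},
\end{equation*}
identifying $M(\lambda,\lambda_*)$ with (a multiple of) the $\lambda_*$-derivative and $Q(\lambda,\lambda_*)$ with what remains after absorbing the $\epsilon$-derivative.

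The $\lambda_*$-derivative is only nontrivial in the post-BBP regime $\bar h\,\lambda_*\ge 1$. There $f_0^{(\alpha)}$ equals $g_{\lambda,0}^{(\alpha)}((D_\mu^{-1}(1/\lambda_*))^2)$, so differentiation is purely algebraic: using the chain rule together with implicit differentiation of $D_\mu^{-1}$ and the closed-form derivative of $T^{(\alpha),-1}$ turns the computation into a manipulation of $\mu$-integrals. The other regimes contribute $M=0$, which is consistent because non-trivial inference is impossible below the BBP threshold. These pieces then get stitched together with the indicators appearing in \eqref{eq:MQ-inf}.

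The delicate step is the $\epsilon$-derivative at $\epsilon=0$. Three quantities in $g_{\lambda,\epsilon}^{(\alpha)}$ depend on $\epsilon$: the outlier $\bar\nu_\epsilon$, the bulk edge $\bar\gamma_\epsilon$, and the bulk density $\hat\rho_\epsilon$ entering $\int \hat\rho_\epsilon(\d t)\ln(x-t)$. Contributions from the first two drop out by the stationarity condition $\nabla\phi=\boldsymbol{0}$ that characterizes the extremum in \eqref{Kaba-int} (together with BBP theory for $\bar\nu_\epsilon$ as recalled in Appendix~\ref{sec:appendix-lowrank}). The main obstacle is therefore to compute $\partial_\epsilon \hat\rho_\epsilon|_{\epsilon=0}$: how the singular spectrum of $\bm{Y}$ is infinitesimally deformed by an additive Wigner perturbation. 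I would handle this by importing the Dyson Bessel process / rectangular free convolution machinery of \cite{guionnet2021large,potters2020first} to obtain a continuity equation for $\hat\rho_\epsilon$ whose velocity is expressible through the Cauchy transform of $\hat\rho$, and hence through $D_\mu$. Integrating against $\ln(x-t)$ and simplifying via the definitions of $D_\mu$ and $C_\mu^{(\alpha)}$ produces a closed expression that, once combined with the $\lambda_*$-term and the factor $1/2$, yields precisely \eqref{eq:MSE-Bayes}.

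Finally I would perform two consistency checks: setting $\lambda=\lambda_*$ should recover the Bayes-optimal formula for non-symmetric rectangular estimation, and specializing $\mu$ to the Marchenko--Pastur law should reduce the result to the known Gaussian-noise mismatched MSE of \cite{pourkamali2022mismatchednonsym}. The rigorous justification of the rectangular Dyson Bessel step, together with Conjecture~\ref{Conj:LogZ}, is what keeps the statement at the level of a conjecture rather than a theorem.
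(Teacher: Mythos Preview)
Your overall plan---combine Lemma~\ref{lemma:I-MMSE} with Conjecture~\ref{Conj:LogZ}, split by regime, and use the Dyson Bessel/Wishart process to differentiate $\int\hat\rho_\epsilon(\d t)\ln(x-t)$ in $\epsilon$---matches the paper's. But one step is wrong and would make the computation come out incorrectly.

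You claim that the $\epsilon$-dependence through $\bar\nu_\epsilon$ and $\bar\gamma_\epsilon$ ``drops out by the stationarity condition $\nabla\phi=\boldsymbol{0}$''. This envelope-theorem reasoning does not apply here: the regimes in which $g_{\lambda,\epsilon}^{(\alpha)}$ appears are precisely the \emph{low-temperature} ones, where $\nabla\phi=\boldsymbol{0}$ has no solution and the extremizer sits on the boundary $z_1z_2=\theta^2\bar\nu_\epsilon^2$. Since that boundary itself moves with $\epsilon$, the chain-rule term through the argument $x=\bar\nu_\epsilon^2$ (resp.\ $x=\bar\gamma_\epsilon^2$) does \emph{not} vanish. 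Concretely, the paper computes
\[
\frac{\d g_{\lambda,0}^{(\alpha)}}{\d x}\bigg|_{x=(D_\mu^{-1}(1/\lambda_*))^2}
=-\frac{1}{2x}\Big(T^{(\alpha),-1}\big(x/\lambda_*\big)-T^{(\alpha),-1}\big(\lambda x/\alpha\big)\Big),
\]
which is nonzero away from the threshold $\lambda\lambda_*=\alpha$, and multiplies it by $\partial_\epsilon(D_\epsilon^{-1}(1/\lambda_*))^2|_{\epsilon=0}=2\alpha C_\mu^{(\alpha)}(1/\lambda_*)+\alpha+1$. In regime~2 one likewise needs $\partial_\epsilon\bar\gamma_\epsilon^2|_{\epsilon=0}$, which the paper obtains from the Laguerre/Wishart SDE for the top eigenvalue. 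These contributions are exactly the pieces carrying the factor $(2\alpha C_\mu^{(\alpha)}(1/\lambda_*)+\alpha+1)$ (resp.\ $(2\alpha T^{(\alpha),-1}(\bar\gamma^2\bar h)+\alpha+1)$) in the final $Q(\lambda,\lambda_*)$ of \eqref{eq:MQ-inf}; if they dropped out, you would not recover \eqref{eq:MQ-inf}. A minor related point: the $\lambda_*$-derivative is nontrivial only when \emph{both} $\bar h\lambda_*\ge 1$ and $\lambda\lambda_*>\alpha$, not merely post-BBP.
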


\begin{figure*}[!t]
\normalsize
\begin{equation}
\begin{IEEEeqnarraybox}{rl}
    M(\lambda, \lambda_*) = &\; \alpha\sqrt{\tfrac{1}{\lambda\lambda_*}}\tfrac{C_{\mu}^{(\alpha)}(\tfrac{1}{\lambda_*})-T^{(\alpha),-1}(\tfrac{\lambda\lambda_*}{\alpha}T^{(\alpha)}(C_{\mu}^{(\alpha)}(\tfrac{1}{\lambda_*})))}{T^{(\alpha)}(C_{\mu}^{(\alpha)}(\tfrac{1}{\lambda_*}))} \\
    &\qquad\cdot\>\left[\tfrac{1}{\lambda_*}C_{\mu}^{(\alpha)\prime}(\tfrac{1}{\lambda_*})(2\alpha C_{\mu}^{(\alpha)}(\tfrac{1}{\lambda_*})+\alpha+1) - T^{(\alpha)}(C_{\mu}^{(\alpha)}(\tfrac{1}{\lambda_*}))\right]\cdot \mathds{1}\big(\bar{h}\lambda_*\geq 1\cap\lambda\lambda_*>\alpha\big), \\
    Q(\lambda,\lambda_*) = &\; \Big\{1 - \tfrac{\alpha}{\lambda\lambda_*}\Big[ 1-\tfrac{C_{\mu}^{(\alpha)}(\tfrac{1}{\lambda_*})-T^{(\alpha),-1}(\tfrac{\lambda\lambda_*}{\alpha}T^{(\alpha)}(C_{\mu}^{(\alpha)}(\tfrac{1}{\lambda_*})))}{T^{(\alpha)}(C_{\mu}^{(\alpha)}(\tfrac{1}{\lambda_*}))}(2\alpha C_{\mu}^{(\alpha)}(\tfrac{1}{\lambda_*})+\alpha+1) \Big]\Big\}\cdot \mathds{1}\big(\bar{h}\lambda_*\geq 1\cap\lambda\lambda_*>\alpha\big) \\
    &\qquad+\>\Big\{1 - \tfrac{\alpha}{\lambda}\Big[ \bar{h}-\tfrac{T^{(\alpha),-1}(\bar{\gamma}^2\bar{h})-T^{(\alpha),-1}(\tfrac{\lambda\bar{\gamma}^2}{\alpha})}{\bar{\gamma}^2}(2\alpha T^{(\alpha),-1}(\bar{\gamma}^2\bar{h})+\alpha+1) \Big]\Big\}\cdot \mathds{1}\big(\bar{h}\lambda_*<1\cap\lambda>\alpha\bar{h}\big). 
\end{IEEEeqnarraybox}
\label{eq:MQ-inf}
\vspace{-5pt}
\end{equation}
\hrulefill
\vspace{-4pt}
\end{figure*}

As a consequence of Conjecture \ref{theorem-mismatch}, similarly to \cite{barbier2022price}, we have the following result for the overlap defined in \eqref{eq:overlap-mis}:
\begin{equation}
    \text{Overlap}_{\text{mis}}=\frac{M(\lambda,\lambda_*)}{Q(\lambda,\lambda_*)^{1/2}}, \label{eq:overlap-Bayes}
\end{equation}
which holds whenever the denominator is non-zero. When comparing the overlaps of different methods in numerical experiments, we will use it for the mismatched Bayes estimator.

\section{AMP and spectral estimators}

In this section, we study the AMP and spectral estimators. In particular, we obtain a deterministic characterization of the performance of AMP in the high-dimensional limit, called SE.

\subsection{AMP estimator}

We start the AMP iterations from an initialization $\bm{u}^1\in\mathbb{R}^n$ independent of $\bm{Z}$, with a positive correlation with $\bm{u}^*$:
\begin{equation}
    (\bm{u}^*,\bm{u}^1)\overset{W_2}{\longrightarrow}(U_*,U_1),\; \mathbb{E}[U_*U_1]:=\epsilon>0,\; \mathbb{E}[U_1^2]=1,
    \label{eq:init-of-gauss-amp}
\end{equation}
where $(\bm{u}^*,\bm{u}^1)\overset{W_2}{\longrightarrow}(U_*,U_1)$ means that the joint empirical distribution of $(\bm{u}^*,\bm{u}^1)$ converges to the one of $(U_*,U_1)$ in Wasserstein-2 ($W_2$) distance. We note that this initialization is impractical, but one can design a practical one from the left singular vector of the data $\bm{Y}$ \cite{montanari2021estimation,mondelli2021pca,zhong2021approximate}. Then, the AMP iteration is given by
\begin{equation}
\begin{IEEEeqnarraybox}{rClrCl}
    \bm{g}^t &=& \bm{Y}^{\mathsf{T}}\bm{u}^t-\alpha \cdot \beta_t\bm{v}^{t-1}, &\bm{v}^t &=& v_t(\bm{g}^t), \\
    \bm{f}^t &=& \bm{Y}\bm{v}^t-\alpha_{t}\bm{u}^t, &\bm{u}^{t+1} &=& u_{t+1}(\bm{f}^t),
\end{IEEEeqnarraybox}
\label{eq:gauss-amp}
\end{equation}
where we assume that $\bm{v}^0=\bm{0}$. Here, the non-linear functions $v_t, u_{t+1}:\mathbb{R}\mapsto\mathbb{R}$ are applied component-wise; $\beta_1=0$ and for $t\geq 2$, $\beta_{t}=\langle u^\prime_{t}(\bm{f}_{t-1})\rangle$; for $t\geq 1$, $\alpha_{t}=\langle v^\prime_{t}(\bm{g}_{t})\rangle$, where $u^\prime_{t}$ and $v^\prime_{t}$ denote the derivatives. Then, the AMP estimator of $(\bm{u}^*,\bm{v}^{*})$ is $(\bm{u}^t,\bm{v}^{t})$, and the one of the spike $\bm{u}^*\bm{v}^{*\mathsf{T}}$ is $M^t_{\text{AMP}}=\bm{u}^t\bm{v}^{t\mathsf{T}}$. We refer to this algorithm as \emph{Gaussian AMP} as in the symmetric case \cite{barbier2022price}, since this is the AMP normally implemented for Gaussian noise.

We now provide the SE of this Gaussian AMP. We initialize $\bar{\nu}_1=\theta\mathbb{E}[U_*U_1]=\theta\epsilon$ and $(\bar{\bm{\Delta}})_{1,1}=\mathbb{E}[U_1^2]=1$, where $\theta := \sqrt{\lambda_*\alpha}$. Then, we define the following SE for $t\geq 1$:
\begin{IEEEeqnarray}{rl}
    &(Z_1,\ldots,Z_t)\sim \mathcal{N}(0,\bar{\bm{\Omega}}_t), \nonumber\\
    &G_t=Z_t+\bar{\nu}_tV_*-\alpha\cdot\bar{\beta}_{t}V_{t-1}+\sum_{i=1}^{t-1}(\bar{\bm{B}}_t)_{t,i}V_i,\;\; V_t=v_t(G_t), \nonumber\\
    &(Y_1,\ldots,Y_t)\sim \mathcal{N}(0,\bar{\bm{\Sigma}}_t),  \label{eq:se-of-gauss-amp}\\
    &F_t=Y_t+\bar{\mu}_tU_*-\bar{\alpha}_tU_{t}+\sum_{i=1}^{t}(\bar{\bm{A}}_t)_{t,i}U_i,\;\; U_{t+1}=u_{t+1}(F_t), \nonumber
\end{IEEEeqnarray}
where we define $V_0=0$, $\bar{\beta}_1=0$ and for $t\geq 2$, $\bar{\beta}_{t}=\mathbb{E}[u^\prime_{t}(F_{t-1})]$; for $t\geq 1$, $\bar{\alpha}_t=\mathbb{E}[v^\prime_{t}(G_{t})]$. The matrices $\bar{\bm{\Delta}}_t$, $\bar{\bm{\Gamma}}_t$, $\bar{\bm{\Phi}}_t$, $\bar{\bm{\Psi}}_t$ and the vectors $\bar{\bm{\nu}}_t$, $\bar{\bm{\mu}}_t$ are defined as follows:
\begin{equation}
\begin{IEEEeqnarraybox}{rl}
    &(\bar{\bm{\mu}}_t)_i = \frac{\theta}{\alpha}\mathbb{E}[V_*V_i],\;\; (\bar{\bm{\nu}}_t)_i = \theta\mathbb{E}[U_*U_i],\;\; 1\leq i\leq t, \\
    &(\bar{\bm{\Delta}}_t)_{ij} = \mathbb{E}[U_iU_j],\;\; (\bar{\bm{\Gamma}}_t)_{ij} = \mathbb{E}[V_iV_j],\;\; 1\leq i,j\leq t, \\
    &(\bar{\bm{\Phi}}_t)_{ij} = \mathbb{E}[\partial_jU_i],\;\; 1\leq i<j\leq t, \\
    &(\bar{\bm{\Psi}}_t)_{ij} = \mathbb{E}[\partial_jV_i],\;\; 1\leq i\leq j\leq t.
\end{IEEEeqnarraybox}
\end{equation}
We emphasize that $\partial_jU_i = \partial_{y_j}U_i$ and $\partial_jV_i = \partial_{z_j}V_i$. Then, we define the matrices $\bar{\bm{\Omega}}_t$, $\bar{\bm{B}}_t$, $\bar{\bm{\Sigma}}_t$, $\bar{\bm{A}}_t$ as follows
\begin{equation}
\begin{IEEEeqnarraybox}{rl}
    \bar{\bm{\Omega}}_t=\alpha\sum_{j=0}^{2t-2}\bar{\kappa}_{2(j+1)}\bar{\bm{\Theta}}_t^{(j)},&\;\; \bar{\bm{B}}_t=\alpha\sum_{j=0}^{t-1}\bar{\kappa}_{2(j+1)}\bar{\bm{\Phi}}_t(\bar{\bm{\Psi}}_t\bar{\bm{\Phi}}_t)^j, \\
    \bar{\bm{\Sigma}}_t=\sum_{j=0}^{2t-1}\bar{\kappa}_{2(j+1)}\bar{\bm{\Xi}}_t^{(j)},&\;\; \bar{\bm{A}}_t=\sum_{j=0}^{t}\bar{\kappa}_{2(j+1)}\bar{\bm{\Psi}}_t(\bar{\bm{\Phi}}_t\bar{\bm{\Psi}}_t)^j,
\end{IEEEeqnarraybox}
\label{eq:corr-ABSigmaOmega}
\end{equation}
where
\begin{equation*}
\begin{IEEEeqnarraybox}{rCl}
    \bar{\bm{\Theta}}_t^{(j)} &=& {\textstyle\sum_{i=0}^j}(\bar{\bm{\Phi}}_t\bar{\bm{\Psi}}_t)^i\bar{\bm{\Delta}}_t(\bar{\bm{\Psi}}_t^\mathsf{T}\bar{\bm{\Phi}}_t^\mathsf{T})^{j-i} \\
    && +\> {\textstyle\sum_{i=0}^{j-1}}(\bar{\bm{\Phi}}_t\bar{\bm{\Psi}}_t)^i\bar{\bm{\Phi}}_t\bar{\bm{\Gamma}}_t\bar{\bm{\Phi}}_t^\mathsf{T}(\bar{\bm{\Psi}}_t^\mathsf{T}\bar{\bm{\Phi}}_t^\mathsf{T})^{j-1-i} ,\\
    \bar{\bm{\Xi}}_t^{(j)} &=& {\textstyle\sum_{i=0}^j}(\bar{\bm{\Psi}}_t\bar{\bm{\Phi}}_t)^i\bar{\bm{\Gamma}}_t(\bar{\bm{\Phi}}_t^\mathsf{T}\bar{\bm{\Psi}}_t^\mathsf{T})^{j-i} \\
    && +\> {\textstyle\sum_{i=0}^{j-1}}(\bar{\bm{\Psi}}_t\bar{\bm{\Phi}}_t)^i\bar{\bm{\Psi}}_t\bar{\bm{\Delta}}_t\bar{\bm{\Psi}}_t^\mathsf{T}(\bar{\bm{\Phi}}_t^\mathsf{T}\bar{\bm{\Psi}}_t^\mathsf{T})^{j-1-i} ,
\end{IEEEeqnarraybox}
\end{equation*}
and $\{\bar{\kappa}_{2j}\}_{j\geq{1}}$ denotes the sequence of rectangular free cumulants associated to the limit of the singular law of $\bm{Z}$. We assume that the non-linear functions $v_{t}$ and $u_{t+1}$ are pseudo-Lipschitz of order $2$, where a function $\psi$ is pseudo-Lipschitz of order $2$ if there is a constant $C>0$ such that $\|\psi(\bm{x})-\psi(\bm{y})\|\leq C(1+\|\bm{x}\|+\|\bm{y}\|)(\|\bm{x}-\bm{y}\|)$. We also assume that the partial derivatives
\vspace{-5pt}
\begin{equation*}
\begin{IEEEeqnarraybox}{rl}
    &\partial_{z_j}v_t(Z_t+\bar{\nu}_tV_*-\alpha\cdot\bar{\beta}_{t}V_{t-1}+\sum_{i=1}^{t-1}(\bar{\bm{B}}_t)_{t,i}V_i), \\
    &\partial_{y_j}u_{t+1}(Y_t+\bar{\mu}_tU_*-\bar{\alpha}_tU_{t}+\sum_{i=1}^{t}(\bar{\bm{A}}_t)_{t,i}U_i)
\end{IEEEeqnarraybox}
\end{equation*}
are continuous on a set of probability $1$, under the laws of $(Z_1,\ldots,Z_t)$, $(V_1,\ldots,V_t)$, $(Y_1,\ldots,Y_t)$ and $(U_1,\ldots,U_t)$ given in \eqref{eq:se-of-gauss-amp}. Then, we have the following rigorous SE for Gaussian AMP. Note that this theorem does \emph{not} follow from the standard SE analysis for Gaussian AMP when the noise is actually Gaussian \cite{fletcher2018iterative}, due to the mismatch.
\begin{theorem}[State evolution of Gaussian AMP] \label{theorem:se-of-gauss-amp}
Consider model \eqref{data}, the Gaussian AMP and its SE above. Let $\phi:\mathbb{R}^{2t+1}\mapsto\mathbb{R}$ and $\psi:\mathbb{R}^{2t+2}\mapsto\mathbb{R}$ be any pseudo-Lipschitz functions of order $2$. For each $t\geq 1$, we almost surely have 
\vspace{-5pt}
\begin{IEEEeqnarray}{rCl}
    &\lim_{m\to\infty}\frac{1}{m}\sum_{i=1}^m&\phi\big((\bm{g}^1)_i,\ldots,(\bm{g}^t)_i,(\bm{v}^1)_i,\ldots,(\bm{v}^t)_i,(\bm{v}^*)_i\big) \nonumber \\
    &&=\mathbb{E}\phi\big(G_1,\ldots,G_t,V_1,\ldots,V_t,V_*\big), \label{eq:se-of-gauus-amp-1} \\
    &\lim_{n\to\infty}\frac{1}{n}\sum_{i=1}^n&\psi\big((\bm{f}^1)_i,\ldots,(\bm{f}^t)_i,(\bm{u}^1)_i,\ldots,(\bm{u}^{t+1})_i,(\bm{u}^*)_i\big) \nonumber \\
    &&=\mathbb{E}\psi\big(F_1,\ldots,F_t,U_1,\ldots,U_{t+1},U_*\big) .\label{eq:se-of-gauus-amp-2}
\end{IEEEeqnarray}
\end{theorem}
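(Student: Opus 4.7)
The plan is to reduce the mismatched Gaussian AMP to a ``properly Onsager-corrected'' AMP on the pure noise matrix $\bm{Z}$, in the spirit of \cite{barbier2022price} for the symmetric square case. More precisely, I would (i) construct an auxiliary AMP running on $\bm{Z}$ alone with full-memory Onsager corrections dictated by the rectangular free cumulants $\{\bar{\kappa}_{2j}\}$, for which a rigorous SE holds; (ii) absorb both the rank-one spike and the ``incorrect'' single-step Onsager terms of the Gaussian AMP into the non-linearities of the auxiliary AMP; (iii) conclude by a pathwise identification of the two sequences of iterates.

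\emph{Spike removal.} Substituting $\bm{Y}=\sqrt{\lambda_*/(mn)}\bm{u}^*\bm{v}^{*\mathsf{T}}+\bm{Z}$ into \eqref{eq:gauss-amp} yields $\bm{Y}^{\mathsf{T}}\bm{u}^t=\sqrt{\lambda_*/(mn)}\langle\bm{u}^*,\bm{u}^t\rangle\bm{v}^*+\bm{Z}^{\mathsf{T}}\bm{u}^t$ and similarly for $\bm{Y}\bm{v}^t$, so the spike acts as a deterministic rank-one drift along $\bm{v}^*$ (resp. $\bm{u}^*$). By the bi-unitary invariance of $\bm{Z}$ and the independence of $(\bm{u}^*,\bm{v}^*)$ from $\bm{Z}$, this is precisely the drift $\bar{\nu}_tV_*$ and $\bar{\mu}_tU_*$ appearing in \eqref{eq:se-of-gauss-amp}.

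\emph{Auxiliary AMP and pathwise identification.} I would introduce time-dependent non-linearities
\begin{equation*}
    \tilde{v}_t(\tilde{g}):=v_t\Bigl(\tilde{g}+\bar{\nu}_tV_*-\alpha\bar{\beta}_tV_{t-1}+{\textstyle\sum_{i=1}^{t-1}}(\bar{\bm{B}}_t)_{t,i}V_i\Bigr),
\end{equation*}
and analogously $\tilde{u}_{t+1}$, and build an AMP on $\bm{Z}$ with long-memory Onsager terms whose coefficients are the rectangular free cumulants $\{\bar{\kappa}_{2j}\}$. For such an AMP, a rigorous SE is available by adapting the rectangular rotationally invariant AMP analyses of \cite{fletcher2018iterative,mondelli2021pca,zhong2021approximate,montanari2021estimation} (the rectangular analogue of the VAMP-style SE). By construction the full-memory Onsager terms of the auxiliary AMP cancel out exactly when the shifted non-linearities are evaluated, so a straightforward induction gives $\tilde{\bm{v}}^t=\bm{v}^t$ and $\tilde{\bm{u}}^{t+1}=\bm{u}^{t+1}$ pathwise. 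The base case $t=1$ uses the correlated initialization \eqref{eq:init-of-gauss-amp} together with a Hanson--Wright/bi-unitary invariance argument on $\bm{Z}^{\mathsf{T}}\bm{u}^1$; the inductive step transfers Wasserstein-2 convergence from the auxiliary AMP to the Gaussian AMP via the pseudo-Lipschitz test functions $\phi,\psi$, yielding \eqref{eq:se-of-gauus-amp-1}--\eqref{eq:se-of-gauus-amp-2}.

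\emph{Main obstacle.} The delicate point is to verify that the combinatorial expressions \eqref{eq:corr-ABSigmaOmega} for $\bar{\bm{\Omega}}_t,\bar{\bm{\Sigma}}_t,\bar{\bm{A}}_t,\bar{\bm{B}}_t$ coincide with the Onsager/covariance structure produced by the auxiliary AMP on $\bm{Z}$. These polynomials in $\bar{\bm{\Phi}}_t,\bar{\bm{\Psi}}_t,\bar{\bm{\Delta}}_t,\bar{\bm{\Gamma}}_t$ weighted by $\bar{\kappa}_{2j}$ encode the alternating-product expansion specific to rectangular free probability, and the identification amounts to a rectangular analogue of the moment--cumulant expansion used in \cite{barbier2022price}. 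A secondary but necessary check is that the shifted non-linearities $\tilde{v}_t,\tilde{u}_{t+1}$ remain pseudo-Lipschitz of order $2$, which follows since the inserted shifts are deterministic affine combinations of quantities whose pseudo-Lipschitz regularity is preserved by assumption on $v_t,u_{t+1}$.
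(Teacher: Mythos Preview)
Your overall strategy coincides with the paper's: build an auxiliary AMP on $\bm{Z}$ with full-memory Onsager terms dictated by the rectangular free cumulants, absorb both the spike drift and the mismatched single-step Onsager into its non-linearities, and invoke the rigorous SE for rectangular rotationally invariant AMP (the paper uses \cite{fan2022approximate}). Your identification of the ``main obstacle'' is, however, misplaced: the formulas \eqref{eq:corr-ABSigmaOmega} are precisely the SE covariances and Onsager coefficients of the rectangular AMP in \cite{fan2022approximate}, so they match the auxiliary AMP by definition; no new moment--cumulant combinatorics is needed there. What does require an inductive argument is the first step in the paper's proof, namely that the auxiliary SE random variables $(\tilde Z_1,\ldots,\tilde Z_t)$ and $(\tilde Y_1,\ldots,\tilde Y_t)$ have the same law as the $(Z_1,\ldots,Z_t)$ and $(Y_1,\ldots,Y_t)$ in \eqref{eq:se-of-gauss-amp}.

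The genuine gap is your claim of \emph{pathwise} identification $\tilde{\bm{v}}^t=\bm{v}^t$ and $\tilde{\bm{u}}^{t+1}=\bm{u}^{t+1}$. This is false, for three reasons that each vanish only asymptotically: (i) the spike drift $\sqrt{\lambda_*/(mn)}\,\langle\bm{u}^*,\bm{u}^t\rangle$ is a random scalar, not the deterministic $\bar{\nu}_t$; (ii) the Gaussian AMP uses the \emph{empirical} $\beta_t=\langle u_t'(\bm{f}^{t-1})\rangle$ and $\alpha_t=\langle v_t'(\bm{g}^t)\rangle$, not the SE limits $\bar{\beta}_t,\bar{\alpha}_t$ that appear in your shifted non-linearities; and (iii) the auxiliary AMP's own Onsager coefficients $b^{\mathrm{A}}_{ti},a^{\mathrm{A}}_{ti}$ are built from empirical matrices $\bm{\Delta}_t^{\mathrm{A}},\bm{\Phi}_t^{\mathrm{A}},\ldots$, not from the deterministic $\bar{\bm{\Delta}}_t,\bar{\bm{\Phi}}_t,\ldots$ baked into $\tilde v_t,\tilde u_{t+1}$. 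Hence $-\sum_i b^{\mathrm{A}}_{ti}\tilde{\bm{v}}^i$ does not cancel $+\sum_i(\bar{\bm{B}}_t)_{t,i}\tilde{\bm{v}}^i$ exactly. The paper therefore does \emph{not} identify the iterates pathwise; it proves, via an induction combining Cauchy--Schwarz with the auxiliary SE, the asymptotic equivalence
\[
\lim_{m\to\infty}\Bigl|\frac{1}{m}\sum_{i}\phi\bigl((\tilde{\bm{g}}^s)_i,(\tilde{\bm{v}}^s)_i,\bm{v}^*_i\bigr)-\frac{1}{m}\sum_{i}\phi\bigl((\bm{g}^s)_i,(\bm{v}^s)_i,\bm{v}^*_i\bigr)\Bigr|=0
\]
(and analogously for $\psi$), where $\tilde{\bm{g}}^s:=\tilde{\bm{z}}^s+\bar{\nu}_s\bm{v}^*-\alpha\bar{\beta}_s\tilde{\bm{v}}^{s-1}+\sum_{i<s}(\bar{\bm{B}}_s)_{s,i}\tilde{\bm{v}}^i$. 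This is where the actual work lies; your proposal needs to replace the pathwise claim by this asymptotic step.
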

\begin{proof}[Proof idea]
We carefully choose the non-linear functions $\tilde{v}_{t}$ and $\tilde{u}_{t+1}$ to design an auxiliary AMP algorithm such that \emph{(i)} there is an SE characterization, and \emph{(ii)} its iterations are close to the Gaussian AMP. The full proof is in Appendix~\ref{sec:appendix-seproof}.
\end{proof}

\subsection{Spectral estimators} 

Spectral estimators are of the form $J\bm{u}_1\bm{v}_1^\mathsf{T}$ where $\bm{u}_1$ and $\bm{v}_1$ are the singular vectors associated to the largest singular value $\sigma_1(\bm{Y})$, with norm $\sqrt{n}$ and $\sqrt{m}$ respectively. 
We consider the following two spectral estimators:
\begin{itemize}
    \item The \emph{optimal spectral estimator (OptSpec)} is $M_{\text{OS}}=J_{\text{OS}}\bm{u}_1\bm{v}_1^\mathsf{T}$, where $J_{\text{OS}}=J(\mu,\lambda_*)$, which depends on the limit of the singular law of $\bm{Z}$ and the SNR $\lambda_*$, is the optimal scaling in the sense that it is not mismatched and thus minimizes the MSE over all spectral estimators.
    \item The \emph{Gaussian mismatched spectral estimator (GauSpec)} is $M_{\text{GS}}=J_{\text{GS}}\bm{u}_1\bm{v}_1^\mathsf{T}$, where $J_{\text{GS}}=J(\mu_{\text{G}},\lambda)$ is the optimal scaling if the noise would be a Gaussian matrix and the true SNR $\lambda$. It thus depends on the singular law $\mu_{\text{G}}$ of a Gaussian matrix and $\lambda$ instead of $(\mu,\lambda_*)$. 
\end{itemize}
The explicit expression of the scaling constant $J(\mu,\lambda_*)$ is given in Theorem \ref{theo:J(mu,lam)} of Appendix \ref{sec:appendix-lowrank}. In both cases, the estimators of $\bm{u}^*$ and $\bm{v}^*$ are $\bm{u}_1$ and $\bm{v}_1$, respectively. From \cite{benaych2012singular}, we know that the overlap $\langle\bm{u}_1,\bm{u}^*\rangle\langle\bm{v}_1,\bm{v}^*\rangle/(mn)$ converges to $J_{\text{OS}}$. This allows us to deduce that $\text{MSE}(M_{\text{OS}})=\frac{1}{2}(1-J_{\text{OS}}^2)$ and $\text{MSE}(M_{\text{GS}})=\frac{1}{2}(1+J_{\text{GS}}^2-2J_{\text{GS}}J_{\text{OS}})$. From $\text{MSE}(M_{\text{GS}})=\text{MSE}(M_{\text{OS}})+\frac{1}{2}(J_{\text{OS}}-J_{\text{GS}})^2$, we have that $\text{MSE}(M_{\text{GS}})\geq\text{MSE}(M_{\text{OS}})$ (the Gaussian estimator is mismatched but the optimal one is not).

Note that when $\lambda=\lambda_*$, there does not appear to be any special relationship between $\text{MSE}(M_{\text{mis}})$ and $\text{MSE}(M_{\text{GS}})$. Indeed, if $\bar{h}>1/\sqrt{\alpha}$, then $\text{MSE}(M_{\text{mis}})$ and $\text{MSE}(M_{\text{GS}})$ differ (although they are numerically close in our experiments). Interestingly, this property differs from the square case \cite{barbier2022price}, where $\text{MSE}(M_{\text{mis}})=\text{MSE}(M_{\text{GS}})$ when $\lambda=\lambda_*$ and $\bar{h}>1$.

\section{Numerical experiments}

In all numerical experiments, we make sure that the density $\mu$ has unit variance. We separate the effect of the mismatch in the noise statistics and in the SNR. More specifically, we consider mismatched noise statistics but matched SNR $\lambda=\lambda_*$ in the first example; in the second example, the noise is matched while the SNR is not.

In the first example, the asymptotic empirical singular density of the noise is the rectangular analogue of the symmetrized Poisson distribution with parameter $c$, which can be seen as the weak limit of $((1-c/n)\delta_0+(c/(2n))(\delta_{-1}+\delta_1))^{*n}$ ($f^{*n}$ means the convolution of a function $f$ with itself $n$ times) and whose rectangular R-transform is $\frac{cz}{1-z}$, see \cite[Section 4.3]{benaych2007infinitely}. We set $c=1$ to enforce unit variance. Furthermore, the rectangular analogue of the symmetrized Poisson distribution with parameter $c=1$ is the limit of the singular law of a random matrix $M(d,d^\prime)=\sum_{k=1}^{d}u_{d}(k)v_{d^\prime}(k)^\mathsf{T}$ where $d\to\infty$, $d/d^\prime\to\alpha$ and $u_{d}(k),v_{d^\prime}(k)\;(k\geq 1)$ are independent uniform random vectors on the unit spheres of $\mathbb{R}^d, \mathbb{R}^{d^\prime}$, see \cite[Proposition 6.1]{benaych2007infinitely}. This characterization allows us to simulate the rectangular analogue of the symmetrized Poisson distribution, and calculate its theoretical Bayes MSE. We compare the Bayes estimator to the \emph{correct AMP} implemented in \cite{fan2022approximate} (based on previous work \cite{opper2001adaptive}), and we call it ``correct'' as it is conjectured to be Bayes-optimal for low-rank matrix estimation with structured noise in the recent work \cite{barbier2022bayes}. However, we emphasize that the AMP of \cite{fan2022approximate} is provably not optimal anymore, when the signals $(\bm{u}^*,\bm{v}^*)$ are not uniformly distributed on the sphere or Gaussian. In this case, the conjectured Bayes-optimal AMP was recently introduced in \cite{barbier2022bayes} for the symmetric case.

We observe the following phenomena, some of which are in correspondence to the symmetric case \cite{barbier2022price}. First, the Gaussian AMP does not perform as well as the mismatched Bayes estimator, especially when $\lambda_*$ is small. Second, when $\lambda_*$ is large enough, the overlaps of all estimators match and saturate the one of the optimal estimators (correct AMP and OptSpec). Thus, the gap in the MSE comes from the incorrect estimation of the signal norm, not from the overlap (i.e., its ``direction''). Third, when there is no SNR mismatch and $\bar{h}>{1}/\sqrt{\alpha}$, we observe that $\text{MSE}(M_{\text{mis}})$ almost matches $\text{MSE}(M_{\text{GS}})$. However, we emphasize that these two values are not exactly the same. Fourth, all estimators are outperformed by the two optimal estimators. Lastly, Bayes and Gaussian spectral MSE curves are non-decreasing with the true SNR $\lambda_*$.

\begin{figure}[!t]
    \centering
    \subfloat{\includegraphics[width=0.45\textwidth, trim={90 210 90 190}]{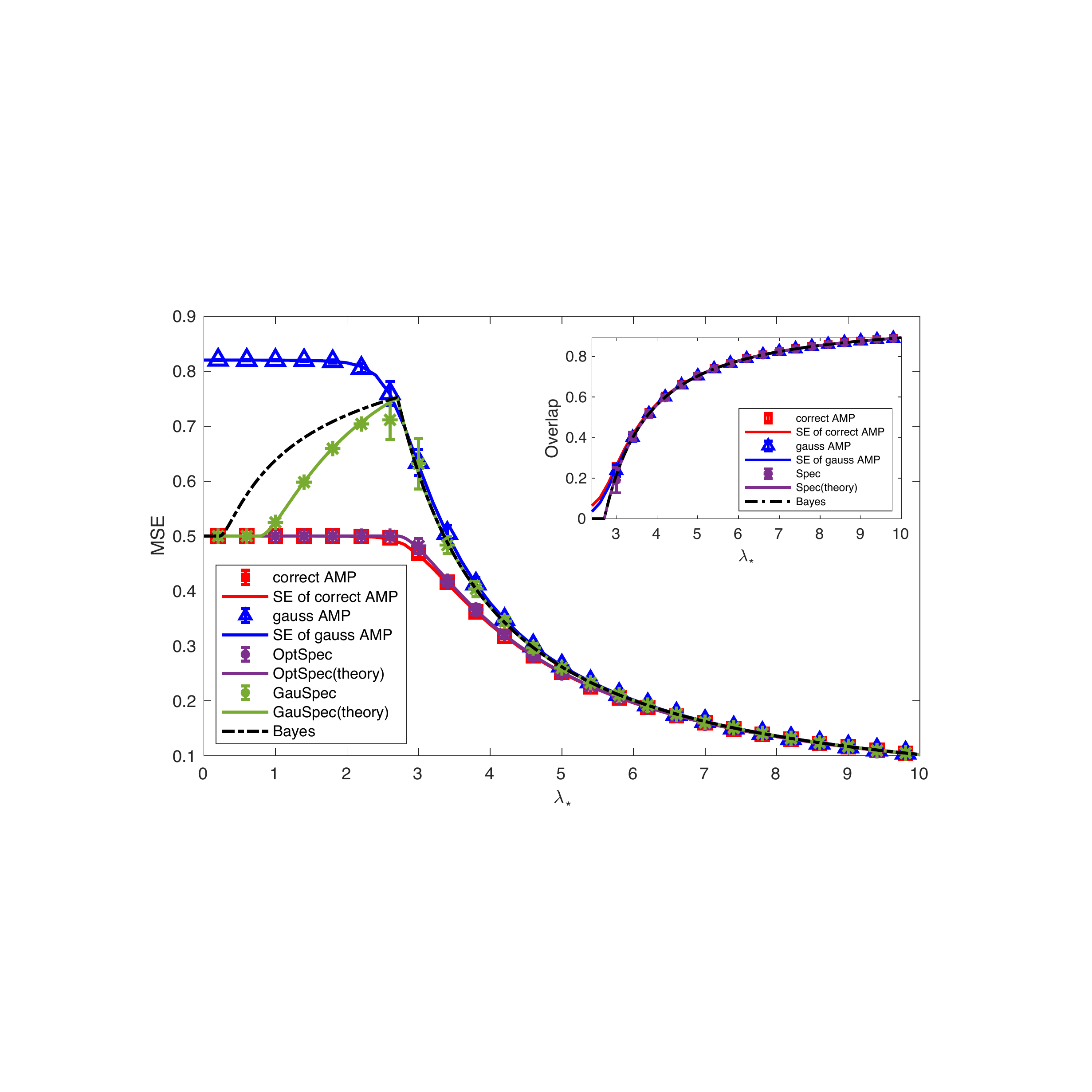}}
    \hfill
    \subfloat{\includegraphics[width=0.45\textwidth, trim={90 190 90 195}]{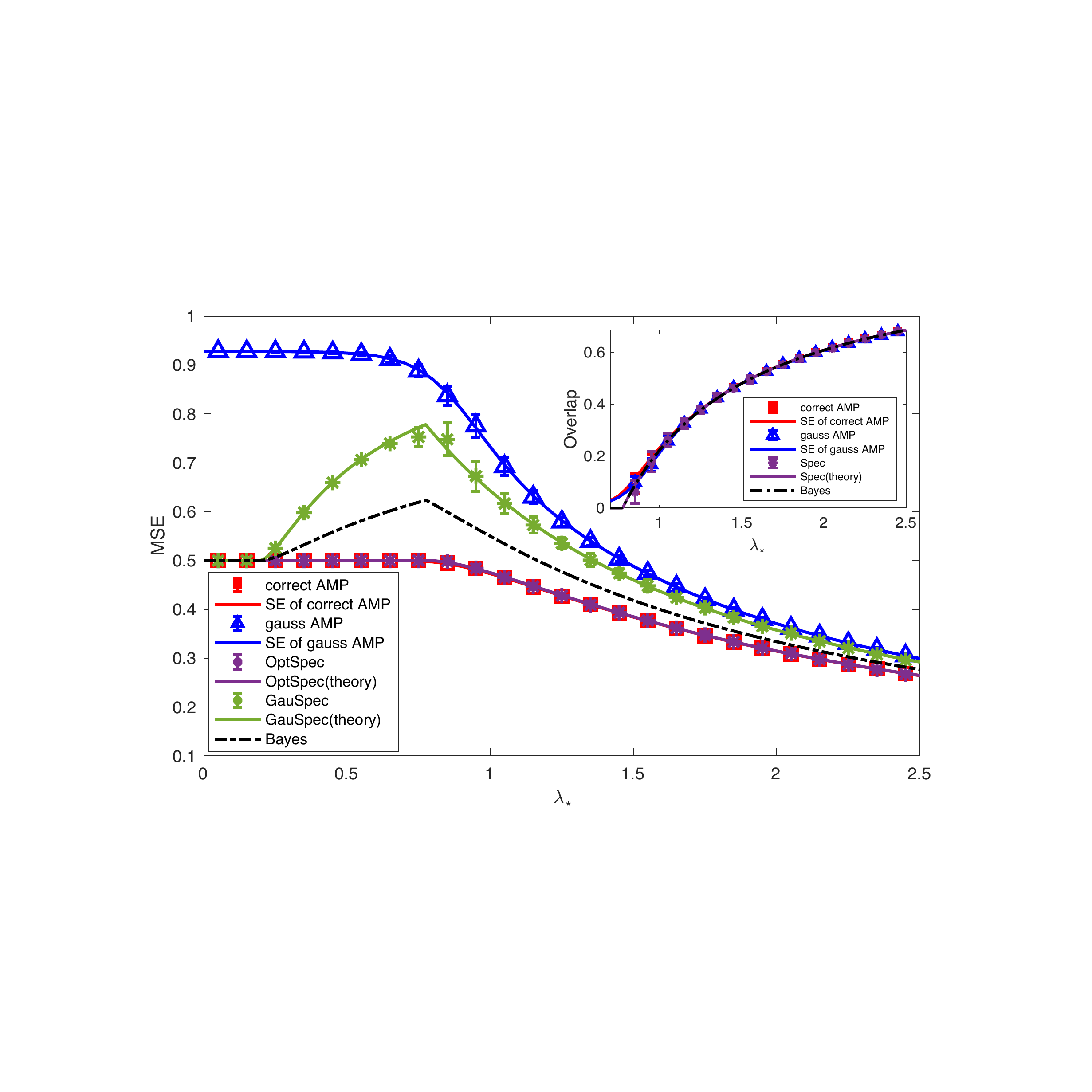}}
    \hfill
    \caption{MSE and overlap as a function of the true SNR $\lambda_*$ in two mismatched settings. The first figure corresponds to a setting with matched SNR and mismatched noise, with a noise drawn according to the rectangular analogue of the symmetrized Poisson distribution \cite{benaych2007infinitely}. For the second, we do the opposite: the noise is Gaussian (and thus matched) while the SNR is mismatched as $\lambda=4\lambda_*$. We used $m=20000$ for AMP, and $m=10000$ for spectral estimators. We set $\alpha=n/m=0.6$. The points are averaged over $100$ trials for AMP and over $20$ for spectral estimators.}
    \label{fig:MSE}
\end{figure}

\section{Perspectives}

Together with \cite{barbier2022price}, the present paper uncovers a surprising phenomenology in various algorithmic behaviors, due to the presence of mismatch in rank-$1$ matrix estimation. It would be interesting to clarify the generality of these observations; in particular, whether they extend or how they differ in multiview versions of matrix estimation \cite{9173970,zhong2021approximate}, when the noise is inhomogeneous \cite{guionnet2022low}, or in tensor estimation \cite{de2022random,seddik2021random}. Another direction is to compare the cost of the mismatch quantified in the present work to the Bayes-optimal performance, which requires extending \cite{barbier2022bayes} to non-symmetric settings.






\appendices

\section{Low-rank perturbations of rotationally invariant matrices}
\label{sec:appendix-lowrank}

In this appendix, we recall some known results concerning low-rank perturbations of rotationally invariant matrices \cite{benaych2012singular}. The first result is about the largest singular value of $\bm{Y}$ in the presence of a rank-one perturbation.
\begin{theorem}[Theorem 2.8 of \cite{benaych2012singular}] \label{theorem:BBP-transition}
Consider the spike model \eqref{data}. Then, as $n\to\infty$, the largest singular value $\bar{\nu}$ of $\bm{Y}$ converges almost surely to
\begin{equation}
    \bar{\nu}=D^{-1}_{\mu}\big(\frac{1}{\lambda_*}\big)\mathds{1}\big(\bar{h}\lambda_*\geq 1\big) + \bar{\gamma}\mathds{1}\big(\bar{h}\lambda_*<1\big). \label{eq:BBP}
\end{equation}
\end{theorem}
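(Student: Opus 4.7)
The plan is to recover this BBP-type transition via the resolvent/secular-equation approach of \cite{benaych2012singular}. Since $\bm Y=\bm Z+\sqrt{\lambda_*/(mn)}\,\bm u^*\bm v^{*\mathsf T}$ is a rank-one perturbation of the bi-unitarily invariant noise $\bm Z$, any outlier singular value $\sigma>\bar\gamma$ of $\bm Y$ must satisfy a finite-dimensional determinantal condition; passing this to the large-$n$ limit will yield the secular equation $\lambda_*D_\mu^{(\alpha)}(\sigma)=1$, and whether this admits a solution above $\bar\gamma$ selects the two regimes of \eqref{eq:BBP}.

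First I would derive the finite-$n$ secular equation. For $\sigma>\bar\gamma+\delta$ the resolvents $R_1(\sigma)=(\sigma^2I-\bm Z\bm Z^{\mathsf T})^{-1}$ and $R_2(\sigma)=(\sigma^2I-\bm Z^{\mathsf T}\bm Z)^{-1}$ are bounded in operator norm. Writing the singular-triple system $\bm Y\bm v_\sigma=\sigma\bm u_\sigma$, $\bm Y^{\mathsf T}\bm u_\sigma=\sigma\bm v_\sigma$ with $\theta=\sqrt{\lambda_*}$ and introducing the scalars $a=\langle\bm v^*,\bm v_\sigma\rangle/\sqrt m$, $b=\langle\bm u^*,\bm u_\sigma\rangle/\sqrt n$, I can eliminate $\bm u_\sigma$ to obtain $\bm v_\sigma=R_2(\sigma)[\theta a\bm Z^{\mathsf T}\bm u^*/\sqrt n+\theta b\sigma\bm v^*/\sqrt m]$. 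Projecting back onto $\bm v^*$ and $\bm u^*$, using the identity $\bm Z R_2(\sigma)\bm Z^{\mathsf T}=\sigma^2R_1(\sigma)-I$, and noting that the cross terms $\bm v^{*\mathsf T}R_2(\sigma)\bm Z^{\mathsf T}\bm u^*/\sqrt{mn}$ and its transpose have zero mean and variance $O(1/n)$ by Hanson--Wright concentration (since $\bm u^*,\bm v^*$ are spherically uniform and independent of $\bm Z$), the system reduces to a $2\times 2$ homogeneous linear system in $(a,b)$ whose nontrivial solvability reads
\begin{equation*}
\frac{\lambda_*\sigma^2}{mn}\,\bm u^{*\mathsf T}R_1(\sigma)\bm u^*\cdot\bm v^{*\mathsf T}R_2(\sigma)\bm v^*=1+o(1).
\end{equation*}

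Next I would take $n,m\to\infty$. Rotational invariance of $\bm Z$ combined with concentration of quadratic forms replaces the two bilinear forms by their normalised traces, which converge uniformly on compacts of $(\bar\gamma,\infty)$ to $\int\rho(\mathrm dy)/(\sigma^2-y)$ and $\int\hat\rho(\mathrm dy)/(\sigma^2-y)$ respectively. Since $\hat\rho=\alpha\rho+(1-\alpha)\delta_0$ and the change of variable $y=t^2$ turns $\int\rho(\mathrm dy)\sigma/(\sigma^2-y)$ into $\int\mu(\mathrm dt)\sigma/(\sigma^2-t^2)$, the limiting secular equation is exactly $\lambda_*D_\mu^{(\alpha)}(\sigma)=1$, with unique solution (when it exists) $\sigma=D_\mu^{-1}(1/\lambda_*)$.

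Finally I would analyse existence. A direct derivative computation shows that $D_\mu^{(\alpha)}$ is smooth and strictly decreasing on $(\bar\gamma,\infty)$, from the edge value $\bar h$ down to $0$, so the secular equation admits a solution $\sigma>\bar\gamma$ iff $1/\lambda_*\leq\bar h$, i.e.\ $\lambda_*\bar h\geq 1$, which recovers the first branch of \eqref{eq:BBP}; in the complementary subcritical regime $\lambda_*\bar h<1$ no outlier can separate and one concludes $\bar\nu\to\bar\gamma$. The main obstacle is the edge analysis: the concentration step only delivers uniform control on $[\bar\gamma+\delta,\infty)$, so one must separately rule out outliers in the shrinking window $(\bar\gamma,\bar\gamma+\delta]$ in the subcritical case, and in the supercritical case verify that the secular root is indeed the \emph{largest} singular value of $\bm Y$ rather than some interior one. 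Both points require rigidity-type estimates for the top singular value of $\bm Z$, ultimately supplied by the assumption that $\bm Z$ has no asymptotic outliers (so $\sigma_1(\bm Z)\to\bar\gamma$).
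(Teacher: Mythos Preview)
The paper does not actually supply a proof of this statement: Theorem~\ref{theorem:BBP-transition} is merely recalled in Appendix~\ref{sec:appendix-lowrank} as a known result, attributed verbatim to Theorem~2.8 of \cite{benaych2012singular}. Your sketch correctly reconstructs the resolvent/secular-equation argument that underlies the original proof in \cite{benaych2012singular}, including the reduction to a $2\times 2$ determinantal condition, the concentration of the quadratic forms to traces, and the monotonicity analysis of $D_\mu^{(\alpha)}$; the edge issues you flag (uniform control near $\bar\gamma$ and identification of the outlier as the \emph{top} singular value) are exactly the technical points handled in that reference via the no-outlier assumption on $\bm Z$. So there is nothing to compare against on the paper's side---your outline is sound and matches the cited source, which is all the paper relies on.
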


The second result is about the overlap of the true signal and the singular vector associated to the largest singular value.
\begin{theorem}[Theorem 2.9 of \cite{benaych2012singular}] \label{theo:J(mu,lam)}
Consider the spike model \eqref{data}. Then, as $n\to\infty$, the singular vectors $\bm{u}_1$ and $\bm{v}_1$ corresponding to the largest singular value $\bar{\nu}$ of $\bm{Y}$ satisfy \eqref{eq:singular-vector} almost surely.
\end{theorem}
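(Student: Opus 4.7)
The plan is to follow the resolvent-based perturbation approach of \cite{benaych2012singular,benaych2011eigenvalues}, adapted to the rectangular case. Starting from the top singular triple $(\bar\nu,\bm{u}_1,\bm{v}_1)$ of $\bm{Y}=\bm{Z}+\theta\,\bm{u}^*\bm{v}^{*\mathsf{T}}$ with $\theta=\sqrt{\lambda_*/(mn)}$, the identities $\bm{Y}\bm{v}_1=\bar\nu\bm{u}_1$ and $\bm{Y}^{\mathsf{T}}\bm{u}_1=\bar\nu\bm{v}_1$ read
\begin{equation*}
\bm{Z}\bm{v}_1+\theta\langle\bm{v}^*,\bm{v}_1\rangle\bm{u}^*=\bar\nu\,\bm{u}_1,\qquad
\bm{Z}^{\mathsf{T}}\bm{u}_1+\theta\langle\bm{u}^*,\bm{u}_1\rangle\bm{v}^*=\bar\nu\,\bm{v}_1.
\end{equation*}
Above the BBP threshold $\bar\nu>\bar\gamma$, so $\bar\nu^2-\bm{Z}\bm{Z}^{\mathsf{T}}$ and $\bar\nu^2-\bm{Z}^{\mathsf{T}}\bm{Z}$ are invertible; solving the system yields $\bm{u}_1$ and $\bm{v}_1$ as explicit linear combinations of $\bm{u}^*,\bm{v}^*$ passed through these two resolvents. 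Projecting back onto $\bm{u}^*$ and $\bm{v}^*$ turns this into a $2\times 2$ homogeneous linear system for the pair $(\langle\bm{u}_1,\bm{u}^*\rangle,\langle\bm{v}_1,\bm{v}^*\rangle)$ whose coefficients are the quadratic forms $\bm{u}^{*\mathsf{T}}(\bar\nu^2-\bm{Z}\bm{Z}^{\mathsf{T}})^{-1}\bm{u}^*$, $\bm{v}^{*\mathsf{T}}(\bar\nu^2-\bm{Z}^{\mathsf{T}}\bm{Z})^{-1}\bm{v}^*$, and the cross form $\bm{u}^{*\mathsf{T}}\bm{Z}(\bar\nu^2-\bm{Z}^{\mathsf{T}}\bm{Z})^{-1}\bm{v}^*$.

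Next, I would invoke the bi-unitary invariance of $\bm{Z}$: conditionally on the singular values $(\sigma_i)$ the left and right singular bases are independent Haar-distributed orthogonal matrices, independent of the uniformly spherical signals $\bm{u}^*,\bm{v}^*$. Standard concentration on the sphere then shows that the three bilinear forms above concentrate on their normalized traces, and a direct computation using the change of variables $\rho(\mathrm dy)\,\mathrm dy=\mu(\mathrm dx)\,\mathrm dx$ with $y=x^2$ gives
\begin{equation*}
\tfrac{1}{n}\mathrm{tr}\,\bar\nu(\bar\nu^2-\bm{Z}\bm{Z}^{\mathsf{T}})^{-1}\to\int\tfrac{\bar\nu\,\mu(\mathrm dt)}{\bar\nu^2-t^2},\quad
\tfrac{1}{m}\mathrm{tr}\,\bar\nu(\bar\nu^2-\bm{Z}^{\mathsf{T}}\bm{Z})^{-1}\to\alpha\!\int\!\tfrac{\bar\nu\,\mu(\mathrm dt)}{\bar\nu^2-t^2}+\tfrac{1-\alpha}{\bar\nu},
\end{equation*}
whose product is precisely $D_\mu^{(\alpha)}(\bar\nu)$, while the cross bilinear form reduces to $\bar\nu$ times the first quantity. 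The solvability condition of the $2\times 2$ system then recovers $D_\mu^{(\alpha)}(\bar\nu)=1/\lambda_*$ as in Theorem~\ref{theorem:BBP-transition}, and reading off its null vector expresses the limits of $|\langle\bm{u}_1,\bm{u}^*\rangle|^2/n$ and $|\langle\bm{v}_1,\bm{v}^*\rangle|^2/m$ as explicit rational functions of $\bar\nu$, $\lambda_*$ and $D_\mu^{(\alpha)\prime}(\bar\nu)$ -- exactly the content of \eqref{eq:singular-vector}. In the sub-threshold regime $\bar h\lambda_*<1$, the outlier sticks to the edge ($\bar\nu=\bar\gamma$), the resolvent identities become degenerate, and a standard perturbation argument shows that both normalized overlaps vanish.

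The main obstacle is the uniform-in-$\bar\nu$ concentration of the three bilinear forms in a neighborhood of the outlier location, particularly the cross term $\bm{u}^{*\mathsf{T}}\bm{Z}(\bar\nu^2-\bm{Z}^{\mathsf{T}}\bm{Z})^{-1}\bm{v}^*$, which has mean zero but a non-negligible limiting contribution once plugged back into the $2\times 2$ system. This is handled by combining Lipschitz concentration on spheres (for the $\bm{u}^*,\bm{v}^*$ randomness), an $\varepsilon$-net argument in $\bar\nu$, and the absence-of-outliers statement for $\bm{Z}$, which guarantees that $\bar\nu$ stays bounded away from $\bar\gamma$ in the super-threshold phase. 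A secondary subtlety is continuity of the overlap as $\bar h\lambda_*\downarrow 1$, which requires a mild regularity assumption on $\mu$ at its right edge $\bar\gamma$, as in \cite{benaych2012singular}.
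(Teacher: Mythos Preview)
The paper does not prove this statement at all: it is merely \emph{recalled} in Appendix~\ref{sec:appendix-lowrank} as Theorem~2.9 of \cite{benaych2012singular}, with no argument given. So there is no ``paper's own proof'' to compare against; your sketch is essentially the resolvent/master-equation strategy of the original Benaych-Georges--Nadakuditi paper, which is indeed the natural route.

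That said, your handling of the cross bilinear form $\bm{u}^{*\mathsf{T}}\bm{Z}(\bar\nu^2-\bm{Z}^{\mathsf{T}}\bm{Z})^{-1}\bm{v}^*$ is internally inconsistent and, as stated, incorrect. You first claim it ``reduces to $\bar\nu$ times the first quantity,'' and later that it ``has mean zero but a non-negligible limiting contribution.'' In fact, after the scaling $\theta=\sqrt{\lambda_*/(mn)}$, one has $\theta\cdot\bm{u}^{*\mathsf{T}}\bm{Z}(\bar\nu^2-\bm{Z}^{\mathsf{T}}\bm{Z})^{-1}\bm{v}^*\to 0$ almost surely (by independence of the left/right Haar bases and a variance computation), so the $2\times2$ system becomes asymptotically \emph{off-diagonal}, and its singularity condition collapses exactly to $\lambda_* D_\mu^{(\alpha)}(\bar\nu)=1$. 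The null-vector direction then only fixes the \emph{ratio} of the two overlaps; the actual values of $|\langle\bm{u}_1,\bm{u}^*\rangle|^2/n$ and $|\langle\bm{v}_1,\bm{v}^*\rangle|^2/m$ come from imposing the normalisations $\|\bm{u}_1\|^2=n$, $\|\bm{v}_1\|^2=m$ on the resolvent expressions for $\bm{u}_1,\bm{v}_1$, which is what brings in the derivative $D_\mu^{(\alpha)\prime}$ (equivalently $C_\mu^{(\alpha)\prime}$) appearing in \eqref{eq:singular-vector}. Your outline skips this normalisation step, which is where the formula actually comes from.
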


\begin{figure*}[!b]
\normalsize
\vspace{4pt}
\hrulefill
\begin{equation}
    J(\mu,\lambda_*):=\lim_{n\to\infty}\frac{|\langle\bm{u}_1,\bm{u}^*\rangle\langle\bm{v}_1,\bm{v}^*\rangle|}{mn}=\frac{|T^{(\alpha)}(C_{\mu}^{(\alpha)}(\frac{1}{\lambda_*}))-\frac{1}{\lambda_*}C_{\mu}^{(\alpha)\prime}(\frac{1}{\lambda_*})(2\alpha C_{\mu}^{(\alpha)}(\frac{1}{\lambda_*})+\alpha+1)|}{\sqrt{T^{(\alpha)}(C_{\mu}^{(\alpha)}(\frac{1}{\lambda_*}))}}\cdot\mathds{1}(\bar{h}\lambda_*\geq 1) \label{eq:singular-vector}
\end{equation}
\end{figure*}

\section{Computation of the log-partition function} \label{sec:appendix-logfunc}

In this section, we will specifically analyze the ``high and low temperatures'' $\theta$ regimes of the rectangular spherical integral and locate the transition point between the two.

\subsection{High temperature regime} \label{seq:equivalentSphInt}

Once we have recognized that the partition function is a rectangular spherical integral, we can use the main result from \cite{benaych2011rectangular} which rigorously shows that for a sufficiently small $\theta$,
\begin{equation}
    \lim_{n\to \infty}\frac{1}{n}\ln I_n(\theta,\bm{Y}) = \int_{0}^{\theta}\frac{C_{\mu}^{(\alpha)}(t^2)}{t}\d t .\label{Flor-int}
\end{equation}
The function
$C^{(\alpha)}_{\mu}$ is the so-called \emph{rectangular R-transform with ratio $\alpha$} \cite{benaych2011rectangular}. Therefore, in this section we show that the formulations \eqref{Flor-int} and \eqref{Kaba-int} for the rectangular spherical integral are equivalent in a certain regime of ``temperatures''. This regime corresponds to values of $\theta$ such that $\nabla \phi=\boldsymbol{0}$ does possess a solution, and we call it the high temperature regime of the rectangular spherical integral. 

Let us make explicit the equations verified by the stationary point. By using the R-transform of $\hat{\rho}$ (see, e.g., \cite{potters2020first}), we have
\begin{equation}
    \frac{\partial \phi}{\partial z_2} = 0\;\;\Rightarrow\;\; z_2=\frac{\theta^2}{z_1}R_{\hat{\rho}}\bigg(\frac{\theta^2}{z_1}\bigg)+1. \label{18}
\end{equation}
A useful identity it verifies (which can be obtained by derivation on both sides with respect to $\theta$ and noticing that it is verified at $\theta=0$) is \cite{guionnet2005fourier}
\begin{equation*}
    \theta R_\rho(\theta) - \int\rho(\dif t)\ln(\theta R_\rho(\theta)-\theta t+1) = \int_{0}^{\theta}R_\rho(x)\dif x.
\end{equation*}
Hence, 
\begin{equation}
    \phi(z_1, z_2(z_1)) = \int_0^{\frac{\theta^2}{z_1}}R_{\hat{\rho}}(t)\d t + \alpha z_1 - \alpha\ln z_1 + 1. \label{equ-temp1}
\end{equation}
Thus, the stationary condition for $z_1$ then satisfies
\begin{equation}
    \frac{\d}{\d z_1}\phi(z_1, z_2(z_1)) = - \frac{\theta^2}{z_1^2}R_{\hat{\rho}}\bigg(\frac{\theta^2}{z_1}\bigg) + \alpha - \frac{\alpha}{z_1} = 0. \label{equ-temp3}
\end{equation}

We want to show that, when $(z_1,z_2)$ is solution to the above pair of saddle point equations \eqref{18} and \eqref{equ-temp3}, then
\begin{equation}
    \frac{1}{2\alpha}\left[\phi(z_1, z_2(z_1)) - (1+\alpha)\right] = \int_{0}^{\theta}\frac{C_{\mu}^{(\alpha)}(t^2)}{t}\d t .\label{equ-temp2}
\end{equation}
Note that this equation is correct when $\theta=0$: taking $\theta=0$ into (\ref{equ-temp3}) yields $z_1=1$; doing the same in (\ref{equ-temp1}) gives $\phi(z_1, z_2(z_1))|_{\theta=0}=\alpha+1$, so that \eqref{equ-temp2} becomes $0=0$. Next, let us take the derivative w.r.t. $\theta$ and then multiply by $\theta$ on both sides of (\ref{equ-temp2}) (because we are at an extremum $\partial_{z_1} \phi=0$). This gives
\begin{equation}
    z_1-1 =\frac{1}{\alpha}\frac{\theta^2}{z_1}R_{\hat{\rho}}\bigg(\frac{\theta^2}{z_1}\bigg) = \frac{1}{\alpha}K_{\hat{\rho}}\bigg(\frac{\theta^2}{z_1}\bigg) \overset{?}{=} C_{\mu}^{(\alpha)}(\theta^2) =: \gamma(\theta), \label{prove1-tmp1}
\end{equation}
where $K_{\hat{\rho}}(z):=zR_{\hat{\rho}}(z)$. The first equality comes directly from (\ref{equ-temp3}) and the question mark is what we need to prove. Denoting simply $\gamma=\gamma(\theta)$, if we define $M_{\rho}(z) = \sum_{k\geq 1}m_{k}(\rho)z^k$ where $m_{k}(\rho) := \int \rho(\dif t)t^{2}$, from Lemma 3.2 of \cite{benaych2011rectangular} we have that 
\begin{equation}
    M_{\rho}\left(\frac{\theta^2}{(\alpha\gamma+1)(\gamma+1)}\right) = \gamma.
\end{equation}
From the relationship between free cumulants and moments, we also have $K_{\hat{\rho}}(z(\alpha M_{\rho}(z)+1)) = \alpha M_{\rho}(z)$. Here we use the fact that $M_{\hat{\rho}}(z) = \alpha M_{\rho}(z)$ and the relation $K_{\hat{\rho}}\big(z(M_{\hat{\rho}}(z)+1)\big)=M_{\hat{\rho}}(z)$ from \cite{benaych2008free}. Consider $z=\theta^2/((\alpha\gamma+1)(\gamma+1))$ into the above to reach
\begin{equation}
    K_{\hat{\rho}}\bigg(\frac{\theta^2}{\gamma+1}\bigg) = \alpha M_{\rho}\left(\frac{\theta^2}{(\alpha\gamma+1)(\gamma+1)}\right) = \alpha\gamma. \label{prove1-tmp2}
\end{equation}
Finally, comparing with the left side of (\ref{prove1-tmp1}) and (\ref{prove1-tmp2}) we obtain 
\begin{equation}
  \gamma(\theta)=z_1(\theta)-1  \label{gammaz1}
\end{equation}
which shows \eqref{prove1-tmp1}. Therefore, whenever $(z_1,z_2)$ is solution to the stationary conditions \eqref{18} and \eqref{equ-temp3}, then \eqref{equ-temp2} holds, and also $C_{\mu}^{(\alpha)}(\theta^2)=z_1-1$.

To complete the argument, we now show that, when a stationary solution exists, then it is unique. We write again the two critical conditions:
\begin{IEEEeqnarray}{rl}
    \frac{\partial \phi}{\partial z_1} &= -\int\hat{\rho}(\d t)\frac{z_2}{z_1z_2-\theta^2t} + \alpha - \frac{\alpha-1}{z_1} = 0 \label{derivative-z1},\\
    \frac{\partial \phi}{\partial z_2} &= -\int\hat{\rho}(\d t)\frac{z_1}{z_1z_2-\theta^2t} + 1 = 0. \label{derivative-z2}
\end{IEEEeqnarray}
Notice that $z_1>0, z_2>0$. Combining these two equations we get that 
\begin{equation}
    z_2=z_2(z_1) = \alpha z_1 - \alpha + 1. \label{equa-z12-1}
\end{equation}
This equation together with \eqref{derivative-z2} fixes uniquely the solution $(z_1,z_2)$ to the stationary equations. Indeed, when we take (\ref{equa-z12-1}) into (\ref{derivative-z2}), we can get that
\begin{equation}
    \int\hat{\rho}(\d t)\Big(z_2(z_1)-\frac{\theta^2t}{z_1}\Big)^{-1} =1.\label{critCond2}
\end{equation}
This integral is strictly decreasing with respect to $z_1$. Thus, if it has a solution for $z_1$, then it is unique. In this case, $z_2$ is also unique because of (\ref{equa-z12-1}).

To summarize: if the system \eqref{derivative-z1}-\eqref{derivative-z2} or equivalently  \eqref{equa-z12-1}-\eqref{critCond2} has a solution (which we now know is unique), then it must be plugged in \eqref{Kaba-int} and this will lead to the equivalent expression \eqref{Flor-int} written in terms of the rectangular R-transform.

\subsection{Low temperature regime}
Let us consider the scenario in which there is no stationary solution. This means that no $(z_1,z_2)\in \mathcal{D}(\theta,\bar\nu)$ verifies the system \eqref{derivative-z1}-\eqref{derivative-z2} (or equivalently \eqref{derivative-z2}-\eqref{equa-z12-1}). We will call this scenario the ``low temperature regime'' of the rectangular spherical integral. Here, the extremum corresponds to the boundary of the domain $\mathcal{D}(\theta,\bar\nu)$. This means that one needs to choose $z_2$ given by \eqref{equa-z12-1} while $z_1$ is given by the following equation defining the boundary of $\mathcal{D}(\theta,\bar\nu)$ (recall that $\bar{\nu}$ is the largest singular value of $\bm{Y}$):
\begin{equation}
    z_1z_2 = \theta^2\bar{\nu}^2. \label{equa-z12-2}
\end{equation}

We now explain why (\ref{equa-z12-1}) and (\ref{equa-z12-2}) determine the value of (\ref{Kaba-int}) when $\theta$ is not small and therefore no stationary solution of $\nabla\phi=\boldsymbol{0}$ exists. The function $\phi$ in \eqref{func-phi12} should be understood as the limit as $n\to \infty$ of the following sequence of functions:
\begin{equation*}
    \phi_n(z_1,z_2)\!:=\!-\frac1n \sum_{k\le n}\ln(z_1z_2-\theta^2\lambda_k)+z_1\alpha+z_2-(\alpha-1)\ln z_1, 
\end{equation*}
where $(\lambda_k)_{k\le n}$ are the eigenvalues of the matrix $\bY\bY^\mathsf{T}$,  which we consider ordered from largest $\lambda_1$ to smallest $\lambda_n$. This function $\phi_n$ represents an action that needs to be extremized to get the expression of the rectangular spherical integral by steepest descent. Thus, we need to look for stationary solutions of $\nabla\phi_n=\boldsymbol{0}$ for large but finite $n$ (not of $\nabla\phi=\boldsymbol{0}$ directly). The unique solution of $\nabla\phi_n=\boldsymbol{0}$, which really is what needs to be plugged in the action $\phi_n$ when evaluating the spherical integral for large $n$, asymptotically matches the solution of $\nabla\phi=\boldsymbol{0}$ only in the high temperature regime.

The stationary conditions $\nabla\phi_n=\boldsymbol{0}$ read like \eqref{derivative-z1}-\eqref{derivative-z2}, but with the integral over the asymptotic density $\hat \rho(t)$ replaced by an empirical expectation over the eigenvalues $(\lambda_k)_{k\le n}$. Therefore, the condition \eqref{equa-z12-1} still holds. Instead, the stationary condition \eqref{critCond2} now reads
\begin{equation*}
    \frac1n \sum_{2\le k\le n}\Big(z_2(z_1)-\frac{\theta^2\lambda_k}{z_1}\Big)^{-1}+\frac1n\Big(z_2(z_1)-\frac{\theta^2\lambda_1}{z_1}\Big)^{-1} =1. 
\end{equation*}
The above sum $n^{-1}\sum_{2\le k\le n}(\cdots)^{-1}$ (which is decreasing in $z_1$) has same limit as the integral in \eqref{critCond2} which, by hypothesis of the lack of existence of a stationary solution of $\nabla\phi=\boldsymbol{0}$, remains bounded below $1$ for $n\to  \infty$ for any value of $z_1$ in $\mathcal{D}(\theta,\bar\nu)$. Thus, the second term must be large enough to compensate and fulfill the stationary conditions above, meaning that it requires 
$$z_2(z_1)z_1- \theta^2\lambda_1 =\Theta(n^{-1}).$$
The limit of $\lambda_1(n)$ is $\bar\nu^2$, so in the large $n$ limit we obtain \eqref{equa-z12-2}. Therefore, when $\nabla\phi=\boldsymbol{0}$ has no solution, the solution $(z_1,z_2)$ of $\nabla\phi_n=\boldsymbol{0}$ for large $n$ ``sticks'' to a constant value for all $\theta$ large enough, which is the analogue of what happens in the standard spherical integral \cite{potters2020first,guionnet2005fourier}. In this ``low temperature regime'', we thus get the value of $(z_1,z_2)$ by solving (\ref{equa-z12-1}), (\ref{equa-z12-2}):
\begin{IEEEeqnarray}{rCl}
    z_1&=&z_1(\alpha,\theta,\bar \nu)=T^{(\alpha), -1}(\theta^2\bar{\nu}^2)+1, \label{equa-z1final}\\
    z_2&=&z_2(z_1) = \alpha z_1(\alpha,\theta,\bar \nu) - \alpha + 1. \label{equa-z12-1-2}
\end{IEEEeqnarray}
Plugging \eqref{equa-z12-2}, \eqref{equa-z1final}, \eqref{equa-z12-1-2} into \eqref{Kaba-int}, \eqref{func-phi12}, the log-rectangular spherical integral at low temperature reads
\begin{IEEEeqnarray}{rl}
    -\frac1{2\alpha}\int\hat{\rho}(\d t)\ln(\bar\nu^2-t)+ z_1(\alpha,\theta,\bar\nu)&-\frac{\alpha-1}{2\alpha}\ln z_1(\alpha,\theta,\bar\nu) \nn
    &-\frac{\ln|\theta|}{\alpha}-1. \label{gsoon}
\end{IEEEeqnarray}

\subsection{Finding the phase boundary}

We aim at finding the value $\bar \theta$ of the temperature $\theta$ which separates the aforementioned high and low temperature phases. From all the previous explanations, we know that the solution $(z_1,z_2)$ is continuous in $\theta$ (it clearly is continuous in the high temperature phase $z_1z_2 > \theta^2\bar{\nu}^2$ and then sticks to the boundary $z_1z_2 = \theta^2\bar{\nu}^2$ when entering in the low temperature one). Therefore, the transition point $\bar \theta$ is defined by the condition that, if $\theta < \bar \theta$, then \eqref{equa-z12-1}, \eqref{critCond2} hold and the spherical integral is given by \eqref{Flor-int} or \eqref{Kaba-int} which match; if instead $\theta > \bar \theta$, then \eqref{equa-z12-1}, \eqref{equa-z1final} hold and the variational form \eqref{Kaba-int} must necessarily be considered, and precisely at $\theta=\bar\theta$ all these conditions hold jointly and both formulations of the spherical integral match. Therefore, if we plug \eqref{equa-z1final}, \eqref{equa-z12-2} into \eqref{critCond2} the condition on $\bar \theta$ can be written as
\begin{equation*}
    \lim_{z\downarrow\bar{\nu}^2}\frac{\alpha-1+((\alpha-1)^2+4\alpha\bar{\theta}^2z)^{1/2}}{2\alpha\bar{\theta}^2}\int \hat{\rho}(\d t)\frac{1}{z-t} = 1,
\end{equation*}
where the $\lim$ is to ensure that the value on the left is meaningful. Recall $H_{\hat{\rho}}(z):=\int \hat{\rho}(\d t)(z-t)^{-1}$ and assume that $\lim_{z\downarrow\bar{\nu}^2}H_{\hat{\rho}}(z)$ is finite (a standard example of such density is the Marcenko-Pastur law). Then from above we have
\begin{equation*}
    \bar{\theta} = \sqrt{D_{\mu}(\bar{\nu}^+)}.
\end{equation*}
Here, $D_{\mu}>0$ is the D-transform of $\mu$. If $z\ge \bar \nu^2$, then $z\ge \bar \gamma^2$ so we can use this definition. We thus have obtained the temperature at which the phase transition separating the two regimes occurs, see \eqref{theta_trans}.

\subsection{Combining everything to get the log-partition function}

Now that we have expressed the log-rectangular spherical integral in both temperature regimes and found the transition point $\bar \theta$ between these, we are ready to get the log-partition function. The first transition is the behavior of the limit of the largest singular values $\bar \nu$ of the data $\bY$. This question was studied in \cite{benaych2012singular} and we recall it in Theorem \ref{theorem:BBP-transition} of Appendix \ref{sec:appendix-lowrank}. The second transition that will play a role is the one between the high and low temperature behaviors of the rectangular spherical integral dictated by \eqref{theta_trans}, which reads: 
\begin{equation}
    {\mbox{Sticking transition}}:\quad \lambda/\alpha = D_{\mu}^{(\alpha)}(\bar{\nu}^+). \label{sticking}
\end{equation}

Our conjecture for the generalized observation model, which matches the initial one \eqref{data} when setting $\epsilon=0$, is stated in Conjecture \ref{Conj:LogZ} which is obtained straightforwardly by combining all our previous results, in particular: the distinction between high \eqref{Flor-int} and low \eqref{gsoon} temperature expressions of the rectangular spherical integral, and the identification of the transition point \eqref{sticking} between these two.

\section{Computation of the mean-square error}\label{sec:appendix-mse}
Now that we obtained a formula for the log-partition function, we can derive the expression for the mean-square error (MSE) of the mismatched Bayesian statistician. In this section, we will omit the estimator symbol $M_{\text{mis}}(\bm{Y})$ because we only consider the mismatched Bayes estimator. From the details of Lemma \ref{lemma:I-MMSE}, we can define $M(\lambda,\lambda_*)$ and $Q(\lambda,\lambda_*)$ as follows
\begin{IEEEeqnarray}{rl}
    M(\lambda,\lambda_*) &:= \lim_{n\to\infty}\mathbb{E}\langle M_n\rangle_0 = 2\alpha\sqrt{\frac{\lambda_*}{\lambda}}\frac{\partial f_{0}(\lambda,\lambda_*)}{\partial\lambda_*}, \label{M-flambda} \\
    Q(\lambda,\lambda_*) &:= \lim_{n\to\infty}\mathbb{E}\langle Q_n\rangle_0 = 1-\frac{2\alpha}{\lambda}\frac{\partial f_{\epsilon}(\lambda,\lambda_*)}{\partial\epsilon}\bigg|_{\epsilon=0} .\label{Q-flambda}
\end{IEEEeqnarray}
The limit of MSE when $n\to\infty$ is then
\begin{equation}
    \lim_{n\to\infty}\text{MSE}_n=\big(1-2M(\lambda,\lambda_*)+Q(\lambda,\lambda_*)\big)/2. 
\end{equation}
Therefore, we need to compute the derivatives \eqref{M-flambda}, \eqref{Q-flambda} in the various temperature (i.e., SNR) regimes dictated by Conjecture \ref{Conj:LogZ}, so that we complete the proof of the Conjecture \ref{theorem-mismatch}.

\subsection{Low temperature regime 1: $\bar{h}\lambda_*\geq 1$ and $\lambda\lambda_*>\alpha$}
\subsubsection{Calculation of  $M(\lambda,\lambda_*)$}
From now on we denote $D_{\mu_\epsilon}$ simply by $D_{\epsilon}$, $C_{\mu_\epsilon}$ by $C_\epsilon$, etc. Firstly, when $\bar{h}\lambda_*\geq 1$ and $\lambda\lambda_*>\alpha$, we get that
\begin{equation}
    \frac{\partial f_{0}(\lambda,\lambda_*)}{\partial\lambda_*} = \frac{\d g_{\lambda,\epsilon}^{(\alpha)}(x)}{\d x}\bigg|_{\epsilon=0,x=(D^{-1}_{0}(\frac{1}{\lambda_*}))^2} \frac{\partial(D^{-1}_{\epsilon}(\frac{1}{\lambda_*}))^2}{\partial\lambda_*}\bigg|_{\epsilon=0}. \label{derivative-fwrtlam}
\end{equation}
Using the relation linking the $D$-transform and rectangular $R$-transform, we can easily get that
\begin{IEEEeqnarray}{rl}
    &\frac{\partial(D^{-1}_{\epsilon}(\frac{1}{\lambda_*}))^2}{\partial\lambda_*}\bigg|_{\epsilon=0} = - \frac{1}{\lambda_*}C_0^\prime\bigg(\frac{1}{\lambda_*}\bigg)\bigg(2\alpha C_0\bigg(\frac{1}{\lambda_*}\bigg)+\alpha+1\bigg) \nn
    & \qquad\qquad\qquad\quad +\bigg(\alpha C_0\bigg(\frac{1}{\lambda_*}\bigg)+1\bigg)\bigg(C_0\bigg(\frac{1}{\lambda_*}\bigg)+1\bigg). \label{derivation-Dlamb}
\end{IEEEeqnarray}
The other term we need to compute is $\d_xg_{\lambda,\epsilon}^{(\alpha)}(x)$ when $\epsilon=0$. We can get that $\int\hat{\rho}(\d t)\frac{x}{x-t} = \alpha T^{(\alpha),-1}(xD(\sqrt{x}))+1$ by the fact that $xD(\sqrt{x}) = T^{(\alpha)}\big(\int\rho(\d t)\frac{x}{x-t}-1\big)$. Thus, the derivative of $g_{\lambda,\epsilon}^{(\alpha)}(x)$ with respect to $x$ can be written as
\begin{equation}
    \frac{\d g_{\lambda,\epsilon}^{(\alpha)}(x)}{\d x} = -\frac{1}{2x}\bigg(T^{(\alpha),-1}(xD_{\epsilon}(\sqrt{x}))-T^{(\alpha),-1}\bigg(\frac{\lambda x}{\alpha}\bigg) \bigg) .\label{derivative-gx}
\end{equation}
Combining (\ref{M-flambda}), (\ref{derivative-fwrtlam}), (\ref{derivation-Dlamb}) and (\ref{derivative-gx}), we can get that when $\bar{h}\lambda_*\geq 1$ and $\lambda\lambda_*>\alpha$ (note from Conjecture~\ref{Conj:LogZ} that the log-partition function depends on $\lambda_*$ only in the present temperature regime, so from \eqref{M-flambda} we know that $M(\lambda,\lambda_*)$ vanishes for the other regimes), $M(\lambda,\lambda_*)$ is in \eqref{eq:MQ-inf}.

\subsubsection{Calculation of  $Q(\lambda,\lambda_*)$}

Secondly, we will calculate the derivative of $f_{\epsilon}^{(\alpha)}(\lambda,\lambda_*)$ with respect to $\epsilon$. Notice that the three parts of $f_{\epsilon}^{(\alpha)}(\lambda,\lambda_*)$ are all related to $\epsilon$,  so our calculation also has three parts. When $\bar{h}\lambda_*\geq 1$ and $\lambda\lambda_*>\alpha$, we know that
\begin{IEEEeqnarray}{rCl}
    \frac{\partial f^{(\alpha)}_{\epsilon}(\lambda,\lambda_*)}{\partial\epsilon} &=& \frac{\d g_{\lambda,\epsilon}^{(\alpha)}(x)}{\d x}\bigg|_{\epsilon=0,x=(D^{-1}_{0}(\frac{1}{\lambda_*}))^2}\!\!\!\frac{\partial(D^{-1}_{\epsilon}(\frac{1}{\lambda_*}))^2}{\partial\epsilon}\bigg|_{\epsilon=0} \nn
    && +\> \frac{\partial g_{\lambda,\epsilon}^{(\alpha)}(x)}{\partial\epsilon}\bigg|_{\epsilon=0,x=(D^{-1}_{0}(\frac{1}{\lambda_*}))^2}. \label{eq:Q-regime1}
\end{IEEEeqnarray}
The first term is easy to get since
\begin{equation*}
    \frac{\partial(D^{-1}_{\epsilon}(\frac{1}{\lambda_*}))^2}{\partial\epsilon}\bigg|_{\epsilon=0} = 2\alpha C_{0}\bigg(\frac{1}{\lambda_*}\bigg)+\alpha+1.
\end{equation*}
Let us focus on the second term. We observe that the only term related to $\epsilon$ in $g_{\lambda,\epsilon}^{(\alpha)}(x)$ is
\begin{equation*}
    \ell(x,\epsilon) = -\frac{1}{2}\int\rho_{\epsilon}(\d t)\ln(x-t) + \text{const},
\end{equation*}
where we use the relation $\hat{\rho}_\epsilon=\alpha{\rho}_\epsilon+(1-\alpha)\delta_0$ and the last constant can be ignored since we just need to calculate the derivative of $\ell(x,\epsilon)$ with respect to $\epsilon$.

Let $\bm{H}(t)=\bm{A} +\frac{1}{\sqrt{m}}\bm{G}(t) \in \mathbb{R}^{n\times m}$, where $\bm{A}$ is deterministic and the elements of $\bm{G}(t)$ are independent Brownian motions. The dynamics of the eigenvalues $\lambda_1(t)\geq \lambda_2(t)\geq \dots \geq\lambda_n(t)$ of $\bm{H}(t)\bm{H}(t)^*$ has been intensively studied, called Laguerre or Wishart process. Specifically, Theorem 2.1 in \cite{guionnet2021large} gives that, for $1\leq i\leq n$,
\begin{equation}
    \d \lambda_i(t) = 2\sqrt{\lambda_i}\frac{\d B_i}{\sqrt{m}} + \bigg( \frac{1}{m}\sum_{j:j\neq i}^{n}\frac{\lambda_i+\lambda_j}{\lambda_i-\lambda_j}+1\bigg)\d t, \label{Wishart-process}
\end{equation}
where $B_1, B_2, \ldots, B_n$ are independent Brownian motions. Let
\begin{equation*}
    \ell_n(x,\{\lambda_i\}) = -\frac{1}{n}\sum_{i=1}^n\ln(x-\lambda_i).
\end{equation*}
Then, 
\begin{equation*}
    \frac{\partial\ell_n}{\partial\lambda_i}=\frac{1}{n}\frac{1}{x-\lambda_i},\;\;\;\; \frac{\partial^2\ell_n}{\partial\lambda_i^2}=\frac{1}{n}\frac{1}{(x-\lambda_i)^2}.
\end{equation*}
Using Ito's lemma (see (8.15)-(8.17) in \cite{potters2020first}), we have that
\begin{IEEEeqnarray}{rl}
    \!\dif \ell_n &= \sum_{i=1}^n\frac{\partial\ell_n}{\partial\lambda_i}\dif\lambda_i+\bigg(\sum_{i=1}^n\frac{2\lambda_i}{m}\frac{\partial^2\ell_n}{\partial\lambda_i^2}\bigg)\dif t \nn
    &= \frac{1}{n}\sum_{i=1}^{n}\frac{1}{x-\lambda_i}\dif t + \frac{1}{mn}\sum_{i,j:i\neq j}^n\frac{\lambda_i+\lambda_j}{(x-\lambda_i)(\lambda_i-\lambda_j)}\dif t \nn
    & \;\> +\>\frac{2}{mn}\sum_{i=1}^n\frac{\lambda_i}{(x-\lambda_i)^2}\dif t + \frac{4}{n\sqrt{m}}\sum_{i=1}^n\frac{\sqrt{\lambda_i}}{x-\lambda_i}\dif B_i. \label{dl-temp1}
\end{IEEEeqnarray}
Then, the second term can be simplified as
\begin{equation*}
    \frac{1}{2mn}\sum_{i,j=1}^n\frac{\lambda_i+\lambda_j}{(x-\lambda_i)(x-\lambda_j)}\dif t - \frac{1}{mn}\sum_{i=1}^n\frac{\lambda_i}{(x-\lambda_i)^2}\dif t.
\end{equation*}
We can easily calculate that
\begin{equation*}
    \frac{1}{n}\sum_{i=1}^n\frac{\lambda_i}{x-\lambda_i}=-1-x\frac{\partial \ell_n}{\partial x},\;\;\;\; \frac{1}{n}\sum_{i=1}^n\frac{1}{x-\lambda_i}=-\frac{\partial \ell_n}{\partial x}.
\end{equation*}
Plugging this equation into (\ref{dl-temp1}), we get that
\begin{IEEEeqnarray}{rCl}
    \dif \ell_n &=& \frac{4}{n\sqrt{m}}\sum_{i=1}^n\frac{\sqrt{\lambda_i}}{x-\lambda_i}\dif B_i + \frac{2}{mn}\sum_{i=1}^n\frac{\lambda_i}{(x-\lambda_i)^2}\dif t \nn
    && -\>\frac{\partial \ell_n}{\partial x}\dif t+\alpha\frac{\partial \ell_n}{\partial x}\Big(1+x\frac{\partial\ell_n}{\partial x}\Big)\dif t.
\end{IEEEeqnarray}
We now take the expectation on both sides of the equation, and notice that the first term is already zero. Then, we also notice that the second term will vanish when $n\to\infty$, so we get that
\begin{equation*}
    \mathbb{E}[\dif\ell_n]=\mathbb{E}\bigg[\alpha x\bigg(\frac{\partial\ell_n}{\partial x}\bigg)^2+(\alpha-1)\frac{\partial\ell_n}{\partial x}\bigg]\dif t + O\bigg(\frac{1}{n}\bigg).
\end{equation*}
Therefore, when $\ell=\lim_{n\to\infty}\ell_n$, we get that
\begin{equation}
    \frac{\partial\ell}{\partial t}=\alpha x\bigg(\frac{\partial\ell}{\partial x}\bigg)^2+(\alpha-1)\frac{\partial\ell}{\partial x}. \label{eq:deriv-lwrtt}
\end{equation}
Using this equation, we can calculate that
\begin{equation*}
    \frac{\partial\ell(x,\epsilon)}{\partial\epsilon}\bigg|_{\epsilon=0,x=(D^{-1}_{0}(\frac{1}{\lambda_*}))^2}\!\!\!\! = \frac{1}{2}D_0(\sqrt{x})\bigg|_{x=(D^{-1}_{0}(\frac{1}{\lambda_*}))^2}\!\!\!\! = \frac{1}{2\lambda_*}.
\end{equation*}
Therefore, we finally get the $Q(\lambda,\lambda_*)$ in \eqref{eq:MQ-inf} by taking the above equation into \eqref{eq:Q-regime1} when $\bar{h}\lambda_*\geq 1$ and $\lambda\lambda_*>\alpha$.

\subsection{Low temperature regime 2: $\bar{h}\lambda_*<1$ and $\lambda>\alpha\bar{h}$}
Next, when $\bar{h}\lambda_*<1$ and $\lambda>\alpha\bar{h}$, we can get that
\begin{IEEEeqnarray}{rCl}
    \frac{\partial f_{\epsilon}^{(\alpha)}(\lambda,\lambda_*)}{\partial\epsilon} &=& \frac{\d g_{\lambda,\epsilon}^{(\alpha)}(x)}{\d x}\bigg|_{\epsilon=0,x=\bar{\gamma}_{0}^2}\cdot \frac{\partial\bar{\gamma}_{\epsilon}^2}{\partial\epsilon}\bigg|_{\epsilon=0} \nn
    && +\> \frac{\partial g_{\lambda,\epsilon}^{(\alpha)}(x)}{\partial\epsilon}\bigg|_{\epsilon=0,x=\bar{\gamma}_{0}^2}. \label{eq:compu-mse-2}
\end{IEEEeqnarray}
The only term we need to calculate is $\partial_{\epsilon}\bar{\gamma}_{\epsilon}^2$. From (\ref{Wishart-process}) we know that
\begin{equation*}
    \frac{\dif\lambda_1}{\dif t}=\frac{1}{m}\sum_{j=2}^n\frac{\lambda_1+\lambda_j}{\lambda_1-\lambda_j}+1.
\end{equation*}
Thus, we can get that
\begin{IEEEeqnarray}{rl}
    \frac{\dif\bar{\gamma}_{\epsilon}^2}{\dif\epsilon}\bigg|_{\epsilon=0} &= \lim_{m\to\infty}\bigg(\frac{1}{m}\sum_{j=2}^n\frac{\lambda_1+\lambda_j}{\lambda_1-\lambda_j}+1\bigg) \nn
    &= \lim_{z\to\bar{\gamma}_0^2}\int\hat{\rho}(\dif t)\frac{z+t}{z-t}+\alpha \nn
    &=2\alpha T^{(\alpha),-1}(\bar{\gamma}_0^2D_0(\bar{\gamma}_0^+))+\alpha+1.\label{eq:laste}
\end{IEEEeqnarray}
Therefore, by combining \eqref{eq:compu-mse-2} and \eqref{eq:laste}, we obtain $Q(\lambda,\lambda_*)$ in \eqref{eq:MQ-inf} for this regime, i.e., when $\bar{h}\lambda_*<1$ and $\lambda>\alpha\bar{h}$, where we recall that $\bar{h}=\bar{h}_0=\lim_{z\downarrow\bar{\gamma}_0}D_{0}(z)$ and we already assume that it is finite.

\subsection{High temperature regime: otherwise \iffalse\teng{($\bar{h}\lambda_*\geq 1\cap\lambda\lambda_*\leq\alpha$ or $\bar{h}\lambda_*<1\cap\lambda\leq\alpha\bar{h}$)}\fi}
Finally, the last case is easy to calculate. Using the rectangular free convolution (see Theorem 3.12 and 3.13 of \cite{benaych2011rectangular}) in the generalized observation model and noticing that the rectangular R-transform of $\sqrt{\epsilon}\bm{W}$ is $\epsilon t$, we can get that $C_{\epsilon}^{(\alpha)}(t)=C_{0}^{(\alpha)}(t)+\epsilon t$. Thus,
\begin{equation}
    \frac{\partial f_{\epsilon}^{(\alpha)}(\lambda,\lambda_*)}{\partial\epsilon} = \int_{0}^{\sqrt{\frac{\lambda}{\alpha}}}\frac{t^2}{t}\dif t = \frac{\lambda}{2\alpha}.
\end{equation}
This gives that $Q(\lambda,\lambda_*)=0$ in the high temperature regime.

\section{Proofs for approximate message passing}\label{sec:appendix-seproof}

\subsection{Auxiliary AMP and its state evolution}
To prove the state evolution result for the Gaussian AMP, we define the so-called auxiliary AMP as follows:
\begin{equation}
\begin{IEEEeqnarraybox}{rClrCl}
    \tilde{\bm{z}}^t &=& \bm{Z}^{\mathsf{T}}\tilde{\bm{u}}^t-\sum_{i=1}^{t-1}b^{\text{A}}_{ti}\tilde{\bm{v}}^i,\;\; &\tilde{\bm{v}}^t &=& \tilde{v}_t(\tilde{\bm{z}}^1,\ldots,\tilde{\bm{z}}^t), \\
    \tilde{\bm{y}}^t &=& \bm{Z}\tilde{\bm{v}}^t-\sum_{i=1}^{t}a^{\text{A}}_{ti}\tilde{\bm{u}}^i,\;\; &\tilde{\bm{u}}^{t+1} &=& \tilde{u}_{t+1}(\tilde{\bm{y}}^1,\ldots,\tilde{\bm{y}}^t),
\end{IEEEeqnarraybox}
\label{eq:auxi-amp}
\end{equation}
where $\tilde{\bm{u}}^1=\bm{u}^1$ in \eqref{eq:gauss-amp}. The non-linear function $\tilde{v}_t$ and $\tilde{u}_{t+1}$ are defined recursively as
\begin{IEEEeqnarray}{rl}
    &\tilde{v}_t(z_1,\ldots,z_t,v_*)=v_t\Big(\!\!-\!\alpha\bar{\beta}_t\tilde{v}_{t-1}(z_1,\ldots,z_{t-1},v_*) \nn
    &\quad\; +\>z_t+\bar{\nu}_{t}v_*+\sum_{i=1}^{t-1}(\bar{\bm{B}}_t)_{t,i}\tilde{v}_{i}(z_1,\ldots,z_{i},v_*)\Big), \label{tildefunc1} \\
    &\tilde{u}_{t+1}(y_1,\ldots,y_t,u_1,u_*)=u_{t+1}\Big(\!\!-\!\bar{\alpha}_t\tilde{u}_{t}(y_1,\ldots,y_{t-1},u_1,u_*) \nn
    &\quad\; +\>y_t+\bar{\mu}_{t}u_*+\sum_{i=1}^{t}(\bar{\bm{A}}_t)_{t,i}\tilde{u}_{i}(y_1,\ldots,y_{i-1},u_1,u_*)\Big) ,\label{tildefunc2}
\end{IEEEeqnarray}
where $\tilde{u}_1(u_1,u_*):=u_1$. The parameters $(\bar{\nu}_t,\bar{\beta}_t,\bar{\bm{B}}_t)$ and $(\bar{\mu}_t,\bar{\alpha}_t,\bar{\bm{A}}_t)$ come from the true Gaussian AMP. We can obtain the coefficient $\{b^{\text{A}}_{t,i}\}_{i=1}^{t-1}$ and $\{a^{\text{A}}_{t,i}\}_{i=1}^{t}$ in the same way as \eqref{eq:corr-ABSigmaOmega} by replacing $\bar{\bm{\Delta}}_t$, $\bar{\bm{\Gamma}}_t$, $\bar{\bm{\Phi}}_t$, $\bar{\bm{\Psi}}_t$ with $\bm{\Delta}_t^{\text{A}}$, $\bm{\Gamma}_t^{\text{A}}$, $\bm{\Phi}_t^{\text{A}}$, $\bm{\Psi}_t^{\text{A}}$, which are defined as follows:
\begin{equation}
\begin{IEEEeqnarraybox}{rl}
    &({\bm{\Delta}}_t^{\text{A}})_{ij} = \langle\tilde{\bm{u}}^i\tilde{\bm{u}}^j\rangle,\;\; ({\bm{\Gamma}}_t^{\text{A}})_{ij} = \langle\tilde{\bm{v}}^i\tilde{\bm{v}}^j\rangle,\;\; 1\leq i,j\leq t, \\
    &({\bm{\Phi}}_t^{\text{A}})_{ij} = \langle\partial_j\tilde{\bm{u}}^i\rangle,\;\; 1\leq i<j\leq t, \\
    &({\bm{\Psi}}_t^{\text{A}})_{ij} = \langle\partial_j\tilde{\bm{v}}^i\rangle,\;\; 1\leq i\leq j\leq t.
\end{IEEEeqnarraybox}
\end{equation}
Here, $\tilde{\bm{u}}^i\tilde{\bm{u}}^j,\,\partial_j\tilde{\bm{u}}^i\in\mathbb{R}^n$ denote the entrywise product and partial derivative with respect to $z_j$, respectively, and $\langle\bm{u}\rangle=(\sum_{i=1}^nu_i)/n$ denotes the mean of a vector. Therefore, from \cite{fan2022approximate}, we can get the state evolution of the auxiliary AMP as follows:
\begin{equation}
\begin{IEEEeqnarraybox}{rl}
    &(\tilde{Z}_1,\ldots,\tilde{Z}_t)\sim\mathcal{N}(0,\bar{\bm{\Omega}}^{\text{A}}_t),\; \tilde{V}_t=\tilde{v}_t(\tilde{Z}_1,\ldots,\tilde{Z}_t,V_*), \\
    &(\tilde{Y}_1,\ldots,\tilde{Y}_t)\sim\mathcal{N}(0,\bar{\bm{\Sigma}}^{\text{A}}_t),\; \tilde{U}_{t+1}=\tilde{u}_{t+1}(\tilde{Y}_1,\ldots,\tilde{Y}_t,\tilde{U}_1,U_*),
\end{IEEEeqnarraybox}
\label{eq:se-of-auxi-amp}
\end{equation}
where the random variable $\tilde{U}_1=U_1$ in \eqref{eq:se-of-gauss-amp}. Furthermore, the coefficients can be given by
\begin{equation}
\begin{IEEEeqnarraybox}{rl}
    &(\bar{\bm{\Delta}}_t^{\text{A}})_{ij} = \mathbb{E}[\tilde{U}_i\tilde{U}_j],\;\; (\bar{\bm{\Gamma}}_t^{\text{A}})_{ij} = \mathbb{E}[\tilde{V}_i\tilde{V}_j],\;\; 1\leq i,j\leq t, \\
    &(\bar{\bm{\Phi}}_t^{\text{A}})_{ij} = \mathbb{E}[\partial_j\tilde{U}_i],\;\; 1\leq i<j\leq t, \\
    &(\bar{\bm{\Psi}}_t^{\text{A}})_{ij} = \mathbb{E}[\partial_j\tilde{V}_i],\;\; 1\leq i\leq j\leq t.
\end{IEEEeqnarraybox}
\end{equation}

We can define the matrices $\bar{\bm{\Omega}}^{\text{A}}_t$, $\bar{\bm{\Sigma}}^{\text{A}}_t$, $\bar{\bm{A}}^{\text{A}}_t$, $\bar{\bm{B}}^{\text{A}}_t$ through \eqref{eq:corr-ABSigmaOmega} by replacing $\bar{\bm{\Delta}}_t$, $\bar{\bm{\Gamma}}_t$, $\bar{\bm{\Phi}}_t$, $\bar{\bm{\Psi}}_t$ with $\bar{\bm{\Delta}}_t^{\text{A}}$, $\bar{\bm{\Gamma}}_t^{\text{A}}$, $\bar{\bm{\Phi}}_t^{\text{A}}$, $\bar{\bm{\Psi}}_t^{\text{A}}$. From \cite{fan2022approximate}, we obtain the following proposition about the state evolution of the auxiliary AMP we defined above.
\begin{proposition}{(State evolution of auxiliary AMP).} \label{proposition:se-of-auxi-amp}
Consider the auxiliary AMP in \eqref{eq:auxi-amp} and its state evolution in \eqref{eq:se-of-auxi-amp}. Let $\tilde{\psi}:\mathbb{R}^{t+2}\to\mathbb{R}$ and $\tilde{\phi}:\mathbb{R}^{t+1}\to\mathbb{R}$ be any pseudo-Lipschitz functions of order $2$. Then for each $t\geq 1$, we almost surely have
\begin{IEEEeqnarray}{rCl}
    &\lim_{m\to\infty}\frac{1}{m}\sum_{i=1}^m&\tilde{\phi}\big((\tilde{\bm{z}}^1)_i,\ldots,(\tilde{\bm{z}}^t)_i,(\tilde{\bm{v}}^1)_i,\ldots,(\tilde{\bm{v}}^t)_i,({\bm{v}}^*)_i\big) \nonumber \\
    &&=\mathbb{E}\tilde{\phi}\big(\tilde{Z}_1,\ldots,\tilde{Z}_t,\tilde{V}_1,\ldots,\tilde{V}_t,{V}_*\big), \\
    &\lim_{n\to\infty}\frac{1}{n}\sum_{i=1}^n&\tilde{\psi}\big((\tilde{\bm{y}}^1)_i,\ldots,(\tilde{\bm{y}}^t)_i,(\tilde{\bm{u}}^1)_i,\ldots,(\tilde{\bm{u}}^{t+1})_i,({\bm{u}}^*)_i\big) \nonumber \\
    &&=\mathbb{E}\tilde{\psi}\big(\tilde{Y}_1,\ldots,\tilde{Y}_t,\tilde{U}_1,\ldots,\tilde{U}_{t+1},{U}_*\big).
\end{IEEEeqnarray}
\end{proposition}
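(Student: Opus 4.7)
The plan is to recognize the auxiliary AMP \eqref{eq:auxi-amp} as a direct instance of the general AMP algorithm for bi-rotationally invariant rectangular matrices analyzed in \cite{fan2022approximate}, and then invoke its state evolution theorem. The key feature that makes this identification work is that, by construction, $\tilde{v}_t$ takes as argument the full history $(\tilde{\bm{z}}^1,\ldots,\tilde{\bm{z}}^t)$ (together with the fixed side information $\bm{v}^*$), and similarly $\tilde{u}_{t+1}$ depends on the full history $(\tilde{\bm{y}}^1,\ldots,\tilde{\bm{y}}^t)$ together with $\tilde{\bm{u}}^1,\bm{u}^*$. This is exactly the form in which Fan's SE is stated, provided one identifies the Onsager coefficients and limiting covariances correctly.

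First I would verify the regularity hypotheses. Since the noise $\bm{Z}$ is bi-orthogonally invariant with compactly supported limiting singular law $\mu$, all rectangular free cumulants $\{\bar{\kappa}_{2j}\}_{j\geq 1}$ exist and the hypotheses of \cite[Theorem~2.3]{fan2022approximate} on the noise ensemble hold. Next, I would show by induction on $t$ that the recursively defined $\tilde{v}_t$ and $\tilde{u}_{t+1}$ from \eqref{tildefunc1}--\eqref{tildefunc2} are pseudo-Lipschitz of order $2$ whenever the underlying $v_t$, $u_{t+1}$ are: each recursive step composes a pseudo-Lipschitz function of order $2$ with an affine combination (with deterministic coefficients $\bar{\nu}_t,\bar{\beta}_t,\bar{\mu}_t,\bar{\alpha}_t$ and the entries of $\bar{\bm{B}}_t,\bar{\bm{A}}_t$) of previously defined pseudo-Lipschitz-of-order-$2$ functions, and the class of pseudo-Lipschitz functions of order $2$ is closed under such operations with a controlled blow-up of the Lipschitz constant. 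The same induction handles the almost-sure continuity of the partial derivatives inherited from the continuity hypothesis placed on the original $v_t$, $u_{t+1}$.

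Second, I would match the Onsager coefficients and covariances. The definitions $(\bm{\Delta}_t^{\mathrm{A}},\bm{\Gamma}_t^{\mathrm{A}},\bm{\Phi}_t^{\mathrm{A}},\bm{\Psi}_t^{\mathrm{A}})$ are the empirical analogues of the state evolution matrices $(\bar{\bm{\Delta}}_t^{\mathrm{A}},\bar{\bm{\Gamma}}_t^{\mathrm{A}},\bar{\bm{\Phi}}_t^{\mathrm{A}},\bar{\bm{\Psi}}_t^{\mathrm{A}})$, and the prescription \eqref{eq:corr-ABSigmaOmega} expresses $b_{t,i}^{\mathrm{A}}$, $a_{t,i}^{\mathrm{A}}$, $\bar{\bm{\Omega}}_t^{\mathrm{A}}$, $\bar{\bm{\Sigma}}_t^{\mathrm{A}}$ through the rectangular free cumulants $\{\bar{\kappa}_{2j}\}$ exactly in the form dictated by Fan's framework (this is where the asymmetric quantities $\alpha$, the alternation between $\bm{Z}$ and $\bm{Z}^{\mathsf{T}}$, and the doubling of indices $2(j+1)$ play their role). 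Once this identification is made, the Wasserstein-$2$ convergence stated in the proposition is a direct translation of \cite[Theorem~2.3]{fan2022approximate} restricted to pseudo-Lipschitz test functions of order $2$.

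The main obstacle is the bookkeeping required to match the deterministic Onsager debiasing coefficients and covariance matrices produced by the free-cumulant expansion in Fan's framework with the explicit combinatorial expressions in \eqref{eq:corr-ABSigmaOmega}: one must check that the sums of matrix products $(\bar{\bm{\Phi}}_t^{\mathrm{A}}\bar{\bm{\Psi}}_t^{\mathrm{A}})^i$ and $(\bar{\bm{\Psi}}_t^{\mathrm{A}}\bar{\bm{\Phi}}_t^{\mathrm{A}})^i$, together with the weight pattern $\bar{\kappa}_{2(j+1)}$ and the prefactor $\alpha$ in the $\bar{\bm{\Omega}}_t^{\mathrm{A}}$ expression, correspond term by term to the rectangular non-crossing pair partitions that encode the free cumulant expansion in the rectangular case. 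Modulo this verification, which is carried out in \cite{fan2022approximate}, the remaining parts of the argument are routine closure and continuity properties of the pseudo-Lipschitz class.
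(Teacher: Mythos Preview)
Your proposal is correct and matches the paper's approach: the paper does not give an independent proof of this proposition but simply states that it follows from \cite{fan2022approximate}, and your plan is precisely to identify the auxiliary AMP \eqref{eq:auxi-amp} as an instance of Fan's general rectangular AMP and invoke his state evolution theorem. Your additional remarks on verifying the pseudo-Lipschitz closure of the recursively defined $\tilde{v}_t,\tilde{u}_{t+1}$ and matching the Onsager/covariance expressions \eqref{eq:corr-ABSigmaOmega} with the free-cumulant formulas in \cite{fan2022approximate} spell out the routine checks that the paper leaves implicit.
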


\subsection{Details of the proof}

We will use the auxiliary AMP and its state evolution to complete our proof. Our strategy is similar to the square case \cite{barbier2022price}, so we only provide a sketch. The first step is to show
\begin{equation}
    (\tilde{Z}_1,\ldots,\tilde{Z}_t)\overset{\text{d}}{=}(Z_1,\ldots,Z_t),\;\; (\tilde{Y}_1,\ldots,\tilde{Y}_t)\overset{\text{d}}{=}(Y_1,\ldots,Y_t),
    \label{eq:prove-first-step}
\end{equation}
where the random variables on the left are defined in \eqref{eq:se-of-auxi-amp} and the ones on the right are defined in \eqref{eq:se-of-gauss-amp}. Combining the definitions of the two state evolutions and of the non-linear functions $\tilde{v}_t$, $\tilde{u}_{t+1}$, \eqref{eq:prove-first-step} follows from an induction argument similar to that for the square case in \cite{barbier2022price}.

Then, the second step is to show that, for any pseudo-Lipschitz functions ${\psi}:\mathbb{R}^{t+2}\to\mathbb{R}$ and ${\phi}:\mathbb{R}^{t+1}\to\mathbb{R}$ of order $2$, the following limit holds almost surely for $t\geq 1$:
\begin{equation}
\begin{IEEEeqnarraybox}{rl}
    &\lim_{m\to\infty}\bigg|\frac{1}{m}\sum_{i=1}^m{\phi}\big((\tilde{\bm{g}}^1)_i,\ldots,(\tilde{\bm{g}}^t)_i,(\tilde{\bm{v}}^1)_i,\ldots,(\tilde{\bm{v}}^t)_i,(\tilde{\bm{v}}^*)_i\big) \\
    &-\frac{1}{m}\sum_{i=1}^m\phi\big((\bm{g}^1)_i,\ldots,(\bm{g}^t)_i,(\bm{v}^1)_i,\ldots,(\bm{v}^t)_i,(\bm{v}^*)_i\big)\bigg|=0,
\end{IEEEeqnarraybox}
\label{eq:prove-second-step1}
\end{equation}
\begin{equation}
\begin{IEEEeqnarraybox}{rl}
    &\lim_{n\to\infty}\bigg|\frac{1}{n}\sum_{i=1}^n{\psi}\big((\tilde{\bm{f}}^1)_i,\ldots,(\tilde{\bm{f}}^t)_i,(\tilde{\bm{u}}^1)_i,\ldots,(\tilde{\bm{u}}^{t+1})_i,(\tilde{\bm{u}}^*)_i\big) \\
    &-\frac{1}{n}\sum_{i=1}^n\psi\big((\bm{f}^1)_i,\ldots,(\bm{f}^t)_i,(\bm{u}^1)_i,\ldots,(\bm{u}^{t+1})_i,(\bm{u}^*)_i\big)\bigg|=0,
\end{IEEEeqnarraybox}
\label{eq:prove-second-step2}
\end{equation}
where we define for $s\in\{1,2,\ldots,t\}$,
\begin{IEEEeqnarray}{rl}
    \tilde{\bm{g}}^s &=\tilde{\bm{z}}^s+\bar{\nu}_s\bm{v}^*-\alpha\bar{\beta}_s\tilde{\bm{v}}^{s-1}+\sum_{i=1}^{s-1}(\bar{\bm{B}}_s)_{s,i}\tilde{\bm{v}}^i, \nn
    \tilde{\bm{f}}^s &=\tilde{\bm{y}}^s+\bar{\mu}_s\bm{u}^*-\bar{\alpha}_s\tilde{\bm{u}}^{s}+\sum_{i=1}^{s}(\bar{\bm{A}}_s)_{s,i}\tilde{\bm{u}}^i. \nonumber
\end{IEEEeqnarray}
This claim follows from an application of Cauchy-Schwarz inequality and another induction argument, which is also similar to the square case analyzed in \cite{barbier2022price}.

In the last step, by using Proposition \ref{proposition:se-of-auxi-amp} together with the conclusion of the second step \eqref{eq:prove-second-step1} and \eqref{eq:prove-second-step2}, we have that
\begin{IEEEeqnarray}{rCl}
    &\lim_{m\to\infty}\frac{1}{m}\sum_{i=1}^m&\phi\big((\bm{g}^1)_i,\ldots,(\bm{g}^t)_i,(\bm{v}^1)_i,\ldots,(\bm{v}^t)_i,(\bm{v}^*)_i\big) \nn
    &&=\mathbb{E}{\phi}\big(\tilde{G}_1,\ldots,\tilde{G}_t,\tilde{V}_1,\ldots,\tilde{V}_t,{V}_*\big), \label{eq:prove-last-1} \\
    &\lim_{n\to\infty}\frac{1}{n}\sum_{i=1}^n&\psi\big((\bm{f}^1)_i,\ldots,(\bm{f}^t)_i,(\bm{u}^1)_i,\ldots,(\bm{u}^{t+1})_i,(\bm{u}^*)_i\big) \nn
    &&=\mathbb{E}{\psi}\big(\tilde{F}_1,\ldots,\tilde{F}_t,\tilde{U}_1,\ldots,\tilde{U}_{t+1},{U}_*\big), \label{eq:prove-last-2}
\end{IEEEeqnarray}
where we have defined for $s\in\{1,\ldots,t\}$,
\begin{IEEEeqnarray}{rl}
    \tilde{G}_s &=\tilde{Z}_s+\bar{\nu}_s{V}_*-\alpha\bar{\beta}_s\tilde{V}_{s-1}+\sum_{i=1}^{s-1}(\bar{\bm{B}}_s)_{s,i}\tilde{V}_i, \nn
    \tilde{F}_s &=\tilde{Y}_s+\bar{\mu}_s{U}_*-\bar{\alpha}_s\tilde{U}_{s}+\sum_{i=1}^{s}(\bar{\bm{A}}_s)_{s,i}\tilde{U}_i. \nonumber
\end{IEEEeqnarray}
Finally, using \eqref{eq:prove-first-step}, we obtain
\begin{IEEEeqnarray}{rl}
    &\mathbb{E}{\phi}\big(G_1,\ldots,G_t,V_1,\ldots,V_t,V_*\big) \nn
    &\qquad\quad =\mathbb{E}{\phi}\big(\tilde{G}_1,\ldots,\tilde{G}_t,\tilde{V}_1,\ldots,\tilde{V}_{t},{V}_*\big), \label{eq:prove-last-3} \\
    &\mathbb{E}{\psi}\big({F}_1,\ldots,{F}_t,{U}_1,\ldots,{U}_{t+1},{U}_*\big) \nn
    &\qquad\quad =\mathbb{E}{\psi}\big(\tilde{F}_1,\ldots,\tilde{F}_t,\tilde{U}_1,\ldots,\tilde{U}_{t+1},{U}_*\big). \label{eq:prove-last-4}
\end{IEEEeqnarray}
Combining the above equalities \eqref{eq:prove-last-1}-\eqref{eq:prove-last-4}, we finally get \eqref{eq:se-of-gauss-amp}, which concludes the proof.

\end{document}